 \providecommand{\F}{\mathbb{F}}
\title{{\bf An Improvement on the Hasse-Weil Bound  and applications to Character Sums, Cryptography and Coding }}
\author{Ronald Cramer\thanks{CWI, Amsterdam \& Mathematical Institute, Leiden University, The Netherlands. {\tt ronald.cramer@cwi.nl}} \and Chaoping Xing\thanks{Division of Mathematical Sciences, School of Physical \&  Mathematical Sciences, Nanyang Technological University, Singapore. {\tt xingcp@ntu.edu.sg}}}
\date{}
\newtheorem{theorem}{Theorem}[section]
\newtheorem{cor}[theorem]{Corollary}
\newtheorem{prop}[theorem]{Proposition}
\newtheorem{lem}[theorem]{Lemma}
\newtheorem{definition}[theorem]{Definition}
\newtheorem{ex}[theorem]{Example}
\theoremstyle{remark}
\renewcommand{\epsilon}{\varepsilon}
\renewcommand{\le}{\leqslant}
\renewcommand{\ge}{\geqslant}
\def\CC{\mathbb{C}}
\def\RR{\mathbb{R}}
\def\ZZ{\mathbb{Z}}
\def\QQ{\mathbb{Q}}
\def\EE{\mathbb{E}}
\def \X {\mathcal{X}}
\def \oF {\overline{\F}}
\def \mJ {\mathcal{J}}
\def \mP {\mathcal{P}}
\def \mX {\mathcal{X}}
\def \mY {\mathcal{Y}}
\def\cL{{\mathcal L}}
\def\Pin{{P_{\infty}}}
\def\Tr{{\rm Tr}}
\def\Trp{{\Tr_{\mP}}}
\def\NL{{\rm NL}}
\def\wt{{\rm wt}}
\newcommand{\Ga}{\alpha}
\newcommand{\Gb}{\beta}
\newcommand{\Gc}{\chi}
\newcommand{\Gg}{\gamma}
\newcommand{\Go}{\omega}
\def\BCH{{\rm BCH}}
\def \tL{{\tilde{L}}}
\def \epf {\hfill $\Box$\\ \par}
\begin{document}
\maketitle

\thispagestyle{empty}

\begin{abstract}
The Hasse-Weil bound is a deep result in mathematics and has found wide applications in mathematics, theoretical computer science, information theory etc. In general, the bound is tight and cannot be improved. However, for some special families of curves the bound could be improved substantially. In this paper, we focus on the Hasse-Weil bound for the curve defined by $y^p-y=f(x)$ over the finite field $\F_q$, where $p$ is the characteristic of $\F_q$. Recently,  Kaufman and Lovett \cite[FOCS2011]{KL11} showed that  the Hasse-Weil bound   can be improved for this family of curves with $f(x)=g(x)+h(x)$, where $g(x)$ is a polynomial of degree $\ll \sqrt{q}$ and $h(x)$ is a sparse polynomial of arbitrary degree but bounded weight degree. The other recent improvement by Rojas-Leon  and Wan \cite[Math. Ann. 2011]{RW11} shows that an extra $\sqrt{p}$ can be removed for this family of curves  if $p$ is very large compared with polynomial  degree  of $f(x)$ and $\log_pq$.

In this paper, we show that the Hasse-Weil bound for this special family of curves can be improved if $q=p^n$ with odd $n$ which is the same case where Serre \cite{Se85} improved the Hasse-Weil bound. However, our improvement is greater than Serre's one for this special family of curves.  Furthermore, our improvement works for small $p$ as well compared with the requirement of large $p$ by Rojas-Leon  and Wan. In addition, our improvement finds interesting applications to character sums, cryptography and coding theory.  The key idea behind is that this curve has the Hasse-Witt invariant $0$ and we show that the Hasse-Weil bound can be improved for any curves with the Hasse-Witt invariant $0$. The main tool used in our proof involves Newton polygon and some results in algebraic geometry.
\end{abstract}

\newpage
\section{Introduction}
Let $\chi$ be a nontrivial additive character from $\F_q$ to the nonzero complex $\CC^*$. The Weil bound for character sums states that, when the degree $m$ of a polynomial $f(x)$ is co-prime with $q$, then $|\EE(\chi(f(x)))|\le (m-1)/\sqrt{q}$.
The Weil bound for character sums can be derived from  the Hasse-Weil bound of a special family of algebraic curves, i.e., $y^p-y=f(x)$, where $p$ is the characteristic of $\F_q$. Therefore, any improvement on the Hasse-Weil bound of this family could lead to an improvement on the Weil bound for character sums.
 The Hasse-Weil bound is a deep result in mathematics and has found wide applications in mathematics, theoretical computer science, information theory etc. In general, the bound is tight and cannot be improved. However, in some special cases the bound could be improved substantially. For instance, when the ground filed size $q$ is small, the bound could be improved up to half of it. There have been various improvements on the Hasse-Weil bound. One of the most famous improvements is the Weil-Serre bound \cite{Se85}. Recently,  Kaufman and Lovett \cite[FOCS2011]{KL11} showed that  the Weil bound for character sums  can be improved for $f(x)=g(x)+h(x)$, where $g(x)$ is a polynomial of degree $\ll \sqrt{q}$ and $h(x)$ is a sparse polynomial of arbitrary degree but bounded weight degree. The other recent improvement by Rojas-Leon  and Wan \cite[Math. Ann. 2011]{RW11} shows that an extra $\sqrt{p}$ can be removed if $p$ is very large compared with polynomial  degree $m$ of $f(x)$ and $\log_pq$.

\subsection{Our result}

From now on in this paper, we assume that $q=p^{n}$ for a prime $p$ and an odd integer $n\ge 3$.
Consider the trace map $\Tr$ from $\F_q$ to $\F_p$ defined by $\Ga\mapsto\sum_{i=0}^{n-1}\Ga^{p^i}$. It is easy to see that $\Tr(\Ga^{pt})=\Tr(\Ga^t)$. This implies that, for a polynomial $f(x)\in\F_q[x]$, one can find a polynomial $g(x)=\sum_{i=0}^mg_ix^i\in\F_q[x]$ such that $g_i$ are zero whenever $i\equiv 0\bmod{p}$ and $\Tr(f(\Ga))=\Tr(g(\Ga))$. Thus, we only consider those polynomials with nonzero term $x^i$, where $\gcd(i,p)=1$.

Now let $f(x)$ be a polynomial of degree $m\ge 1$ over $\F_q$.
Without loss of generality, we may assume that $\gcd (m,p)=1$. We consider the cardinality
of the set
\begin{equation}\label{eq2.0}Z_f:=\{\Ga\in\F_q:\; \Tr(f(\Ga))=0\}.  \end{equation}
Then the Weil bound shows that
\begin{equation}\label{eq:e0}\left||Z_f|-\frac{q}{p}\right|\le \frac{(p-1)(m-1)\sqrt{q}}p.  \end{equation}

The main result of this paper is given below.
\begin{theorem}[MAIN RESULT]\label{1.1}
Let $g=(m-1)(p-1)/2$, then one has
\begin{equation}\label{eq2.5}\left||Z_f|-\frac qp\right|\le\left\{\begin{array}{ll} p^{\lceil n/g\rceil-1}\left\lfloor\frac{g\lfloor2\sqrt{q}\rfloor}{p^{\lceil n/g\rceil}}\right\rfloor& \mbox{if $g>1$}\\
p^{(n-1)/2}\left\lfloor\frac{\lfloor2\sqrt{q}\rfloor}{p^{ (n+1)/2}}\right\rfloor & \mbox{if $g=1$}.\end{array}\right.\end{equation}
\end{theorem}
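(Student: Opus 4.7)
The plan is to translate the counting of $Z_f$ into a rational-point count on the Artin-Schreier curve $C : y^p - y = f(x)$, express it through Frobenius eigenvalues, and exploit the vanishing of the Hasse-Witt invariant of $C$ to strengthen Serre's refinement of the Hasse-Weil bound. The cover $C \to \PP^1$ has degree $p$ and is ramified only at the unique point above $\infty$, so $N_q(C) = p|Z_f| + 1$. By Riemann-Hurwitz the genus is $g = (m-1)(p-1)/2$, and by Weil
\[
N_q(C) = q + 1 - \sum_{i=1}^{2g}\alpha_i
\]
with $|\alpha_i| = \sqrt q$ and $\alpha_i\bar\alpha_i = q$. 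Pairing complex conjugates into $\beta_i := \alpha_i + \bar\alpha_i$ for $1 \le i \le g$ gives $|Z_f| - q/p = -s/p$, where $s := \sum_{i=1}^g \beta_i = -c_1$ and $c_1$ is the linear coefficient of $L(T) = \prod_i(1 - \alpha_i T)$. So the task reduces to bounding the rational integer $s$.

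Two ingredients are needed. The archimedean input is Serre's inequality $|s| \le g\lfloor 2\sqrt q\rfloor$, deduced from the positivity of the rational integer $\prod_i\bigl((\lfloor 2\sqrt q\rfloor + 1)^2 - \beta_i^2\bigr)$. The $p$-adic input, which is the new contribution, is the divisibility $p^k \mid s$ with $k := \lceil n/g\rceil$. I would establish this in two steps: first, verify that $C$ has Hasse-Witt invariant $0$, i.e.\ that the Cartier operator on $H^1(C, \mathcal O_C)$ vanishes, by an explicit computation in the monomial basis of $H^1(C,\mathcal O_C)$ arising from the Artin-Schreier presentation, using the hypothesis that all exponents of $f$ may be taken coprime to $p$; second, feed this vanishing into a Newton polygon analysis of $L(T)$ (through Stickelberger-type congruences coming from the Dwork $p$-adic trace formula specific to Artin-Schreier covers) to conclude that the smallest slope $\lambda_1$ of the polygon is at least $1/g$. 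Since $v_p(c_1) \ge \lambda_1 n$ under the normalization $v_p(q) = n$ and $s = -c_1 \in \ZZ$, this yields $v_p(s) \ge \lceil n/g\rceil = k$.

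Combining the two, $s \in \ZZ$ satisfies $|s| \le g\lfloor 2\sqrt q\rfloor$ and $p^k \mid s$, so $|s|/p^k$ is a nonnegative integer at most $\lfloor g\lfloor 2\sqrt q\rfloor/p^k\rfloor$, and dividing by $p$ yields the asserted bound in the case $g > 1$. The edge case $g = 1$ reduces to $(m,p) \in \{(2,3),(3,2)\}$, where $C$ is a supersingular elliptic curve over $\F_{p^n}$; since $n$ is odd, Waterhouse's classification restricts the Frobenius trace $s$ to $\{0, \pm p^{(n+1)/2}\}$, yielding the stronger divisibility $p^{(n+1)/2}\mid s$ and the second formula. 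The main obstacle, I expect, is the second part of the $p$-adic step: Hasse-Witt vanishing by itself gives only positivity of slopes, and pinning down the precise bound $\lambda_1 \ge 1/g$ demands a tight Newton-polygon analysis of the $L$-function of Artin-Schreier curves that appears to be the technical heart of the paper.
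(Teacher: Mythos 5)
Your reduction of the theorem to two statements about the Frobenius trace $s=-a_1$ --- the archimedean bound $|s|\le g\lfloor 2\sqrt q\rfloor$ (Serre) and the $p$-adic divisibility $p^{\lceil n/g\rceil}\mid s$ --- is exactly the paper's strategy, and your outer layers are sound: the point count $N_q(C)=p|Z_f|+1$, the genus computation, the final flooring argument, the $g=1$ case via Waterhouse's classification of supersingular elliptic curves over $\F_{p^n}$ with $n$ odd, and the Hasse-Witt vanishing (your Cartier-operator computation is a legitimate alternative to the paper's Deuring--Shafarevich argument). But there is a genuine gap at precisely the step you yourself flag as the ``main obstacle'': nothing in your write-up proves the key claim that the smallest normalized slope $\lambda_1$ is at least $1/g$, from which $\nu_p(a_1)\ge n\lambda_1\ge n/g$ would follow. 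You correctly observe that Hasse-Witt vanishing alone gives only positivity of slopes, and positivity really is far too weak: for the naive Newton polygon of the integer polynomial $\tilde L_f(T)$ over $\QQ_p$, whose vertices are the lattice points $(i,\nu_p(a_i))$, positive slopes yield only $\nu_p(a_1)\ge 1$, not $\nu_p(a_1)\ge n/g$; the factor of $n$ must come from somewhere else. Your proposed remedy --- ``Stickelberger-type congruences coming from the Dwork trace formula'' --- is left entirely undeveloped and is not the mechanism that makes the bound true; it is a gesture toward a different (and much harder) body of work on Newton polygons of Artin--Schreier $L$-functions, not an argument.

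What actually closes the gap in the paper is Honda--Tate theory. One decomposes the Jacobian up to isogeny into simple abelian varieties, so that $\tilde L_f(T)=\prod_i r_i(T)^{e_i}$ with each $r_i$ irreducible over $\QQ$; a coefficient-wise lemma (Lemma 3.6 of the paper) shows that each factor $r_i(T)^{e_i}$ is again a $q$-Weil polynomial with Hasse-Witt invariant $0$; and --- this is the decisive input --- the Tate--Honda theorem says the exponent $e_i$ equals the least common denominator of the numbers $\nu_p(h(0))/n$ as $h$ ranges over the irreducible factors of $r_i$ in $\QQ_p[T]$. For the $\QQ_p$-factor attached to the smallest-slope segment this forces $\nu_p(h(0))\ge n/e_i$, and an elementary slope comparison on the Newton polygon then shows that the linear coefficient of $r_i(T)^{e_i}$ has valuation at least $n/(r_ie_i)=n/g_i\ge n/g$, where $2r_i=\deg r_i$ and $g_i=r_ie_i$; summing over the factors gives $\nu_p(a_1)\ge n/g$, hence $\nu_p(a_1)\ge\lceil n/g\rceil$ by integrality. (Equivalently one may invoke the Dieudonn\'e--Manin theorem, which guarantees that the \emph{normalized} Newton polygon of an abelian variety has integral break points; together with symmetry and $p$-rank $0$ this yields $\lambda_1\ge 1/g$. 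But some such input beyond Hasse-Witt vanishing is indispensable.) Until you supply this ingredient, the central divisibility $p^{\lceil n/g\rceil}\mid s$ --- and with it the theorem --- remains unproved.
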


One could not see how largely the bound (\ref{eq:e0}) is improved in Theorem \ref{1.1}. In fact, the improvement could be substantial. We refer this to Example \ref{ex2.2} and Sections 2.3 and 2.4 for numerical illustration.

\subsection{Our technique}
Our  technique for the improvement is through an improvement of the Hasse-Weil bound for the algebraic curve. More precisely speaking, we show the improvement through three steps: (i) show that the Weil bound for character sums is derived from the Hasse-Weil bound of an algebraic curve; (ii) prove that the Hasse-Weil bound for the algebraic curve with the Hasse-Witt invariant $0$ can be improved; (iii) show by the Deuring-Shafarevich Theorem that the curve $y^p-y=f(x)$ has the Hasse-Witt invariant $0$. Consequently, we obtain an improvement on the Weil bound for character sums.

Among the above three steps, the critical one is to show  an improvement on the Hasse-Weil bound for the algebraic curve with the Hasse-Witt invariant $0$. In order to achieve this, we analyze the Newton polygon of the characteristic polynomial of an abelian verity with Hasse-Witt invariant $0$. Then we employ some fundamental results on factorization of the characteristic polynomial to obtain the desired result.
\subsection{Comparison}
We mainly compare our improvement with those by Serre \cite{Se85},  Kaufman-Lovett \cite[FOCS2011]{KL11} and Rojas-Leon-Wan \cite[Math. Ann. 2011]{RW11}.

The improvement by Serre applies to arbitrary algebraic curve over $\F_{p^n}$ with odd $n$, while our improvement only applies to the curve $y^p-y=f(x)$ over $\F_{p^n}$ with odd $n$. On the other hand, our improvement is  even greater than the one by Serre for  this special family of curves.. The method by Serre mainly employs some properties of algebraic numbers.

The improvement by Kaufman and Lovett works for those polynomial with degree bigger than $\sqrt{q}$. Thus, the scenario is different. The main idea by Kaufman-Lovett uses Deligne Theorem on multivariate polynomials.

The improvement by Rojas-Leon and Wan shows that   an extra $\sqrt{p}$ can be removed if $p$ is very large compared with polynomial  degree $m$ of $f(x)$ and $\log_pq$. However, our improvement works for any characteristic $p$ including $p=2$ (see numerical examples of Section 2). The idea of Rojas-Leon and Wan involves moment $L$-functions and Katz's  work on $\ell$-adic monodromy calculations.

\subsection{Organization}
Section 2 presents the main result and some applications to character sum, cryptography and coding theory. We also show that the  Weil bound for character sums can be derived from the Hasse-Weil bound of $y^p-y=f(x)$  in Section 2. We outline the proof of the main result in Section 3 without requirement on knowledge of algebraic geometry.  In Appendix, we show the detailed proof of the main result with requirement on knowledge of algebraic geometry.

\section{Main result and applications}
\subsection{Main result}
The Hasse-Weil bound \cite{Weil48} provides an upper bound on the number of points of an algebraic curve over a finite field $\F_q$ in terms of its genus and the ground field size $q$. The bound was improved by Serre \cite{Se85} when $q$ is not a square. The refined bound by Serre is now called the Weil-Serre bound.  In this section, we show that the Weil-Serre bound can be further improved for a class of curves arising from trace when $q$ is not a square. Furthermore, several applications are provided for this improvement.

It is a well-known fact that the cardinality of $Z_f$ in (\ref{eq2.0}) is equal to $(N_f-1)/p$, where $N_f$ stands for the number of  the $\F_q$-rational points on the Artin-Schreier type curve defined by
\begin{equation}\label{eq2.1} \mX_{f}:\quad y^p- y=f(x).\end{equation}
Note that the set of the $\F_q$-rational points on $\mX_f$ is $\{\Pin\}\cup\{(\Ga,\Gb)\in\F_q^2:\; \Gb^p-\Gb=f(\Ga)\}$, where $\Pin$ stands for the ``points at infinity".

To see this, we note that  $\Pin$ can be discarded towards counting the cardinality of $Z_f$. Furthermore, an element $\Ga$ belongs to $ Z_f$ if and only if there are $p$ elements $\Gb\in\F_q$ such that $(\Ga,\Gb)$ are solutions of the equation (\ref{eq2.1}).

When applying the Weil bound \cite{Weil48}  to the curve $\mX_f$,   we have
\begin{equation}\label{eq2.2}
|N_f-q-1|\le 2g\sqrt{q},
\end{equation}
where $g=(m-1)(p-1)/2$ is the {\it genus} of $\mX_f$. Serre \cite{Se85} improved the above bound to the following Weil-Serre bound
\begin{equation}\label{eq2.3}
|N_f-q-1|\le g\lfloor2\sqrt{q}\rfloor.
\end{equation}
 It seems that the bound (\ref{eq2.3}) is just a small improvement on the Weil bound. However, the improvement could be substantial if $m$ is big. For instance, $q=32$ and $m=2001$, then the Weil bound gives $|N_f-q-1|\le 11,313$, while the Weil-Serre bound gives $|N_f-q-1|\le 11,000$.  

Applying the Weil-Serre bound, we get the following bound on the cardinality $|Z_f|$
\begin{equation}\label{eq2.4}
\left||Z_f|-\frac qp\right|\le \left\lfloor \frac{g\lfloor2\sqrt{q}\rfloor}p\right\rfloor.
\end{equation}
In this paper, we show that the bound (\ref{eq2.4}) can be further improved. 
We repeat the main result of this paper is that improves the above bound (\ref{eq2.4}) as follows.

\begin{theorem}[MAIN RESULT]\label{2.1}
Let $g=(m-1)(p-1)/2$, then one has
\begin{equation}\label{eq2.5}\left||Z_f|-\frac qp\right|\le\left\{\begin{array}{ll} p^{\lceil n/g\rceil-1}\left\lfloor\frac{g\lfloor2\sqrt{q}\rfloor}{p^{\lceil n/g\rceil}}\right\rfloor& \mbox{if $g>1$}\\
p^{(n-1)/2}\left\lfloor\frac{\lfloor2\sqrt{q}\rfloor}{p^{ (n+1)/2}}\right\rfloor & \mbox{if $g=1$}.\end{array}\right.\end{equation}
\end{theorem}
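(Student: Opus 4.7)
The plan follows the three-stage program outlined in the introduction: (i) rephrase the bound on $||Z_f| - q/p|$ as a $p$-adic divisibility of the $t$-coefficient $a_1$ of the $L$-polynomial of $\mX_f$; (ii) show that $\mX_f$ has Hasse-Witt invariant zero; and (iii) extract the desired divisibility from the Newton polygon of $L$.

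For (i), write $L(t)=\prod_{i=1}^{2g}(1-\Ga_i t)\in\ZZ[t]$ with $|\Ga_i|=\sqrt q$. Then $N_f=q+1-\sum_i\Ga_i$, so $p\cdot\big||Z_f|-q/p\big|=|a_1|$ where $a_1:=-\sum_i\Ga_i\in\ZZ$. The Weil-Serre bound (\ref{eq2.3}) gives $|a_1|\le g\lfloor 2\sqrt q\rfloor$. Hence it suffices to establish $p^k\mid a_1$ for $k=\lceil n/g\rceil$ (when $g>1$) or $k=(n+1)/2$ (when $g=1$); the claimed inequality then follows from $|a_1|\le p^k\lfloor g\lfloor 2\sqrt q\rfloor/p^k\rfloor$ and dividing by $p$.

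For (ii), the projection $(x,y)\mapsto x$ realizes $\mX_f$ as a $\ZZ/p$-Galois (Artin-Schreier) cover of $\PP^1$ ramified only over $\Pin$ with ramification index $p$. Deuring-Shafarevich then yields
\[
\sigma_{\mX_f}-1 \;=\; p(\sigma_{\PP^1}-1)+(p-1) \;=\; -p+(p-1) \;=\; -1,
\]
so the $p$-rank $\sigma_{\mX_f}$ vanishes; equivalently, the Hasse-Witt invariant of $\mX_f$ is $0$.

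For (iii)---the main step and the principal obstacle---the Hasse-Witt vanishing means every $v_p(\Ga_i)$ lies strictly in $(0,n)$, and the Weil relation $\Ga_i\bar\Ga_i=q$ forces these slopes to be symmetric about $n/2$. Factor $L=\prod_jL_j$ into irreducibles in $\ZZ[t]$; each factor has constant term $\pm1$ and leading coefficient $q^{d_j/2}$, which since $n$ is odd forces $d_j$ to be even. The coefficient of $t$ in $L_j$ equals $\pm\sum_k\Ga_{jk}\in\ZZ$, summing over the Frobenius eigenvalues in the $j$-th factor, and $a_1$ splits, up to signs, into the sum over $j$ of these. The technical heart of the argument is to combine the $n/2$-symmetry of the slopes, the bound $d_j\le 2g$, the constraint that the denominators of slopes (in lowest terms) divide the degrees of the $\QQ_p$-irreducible pieces of $L_j$, and the integrality of each $\sum_k\Ga_{jk}$, in order to deduce the bound $v_p\bigl(\sum_k\Ga_{jk}\bigr)\ge\lceil n/g\rceil$ when $g>1$. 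The case $g=1$ is cleaner: Hasse-Witt zero together with $g=1$ forces $\mX_f$ to be a supersingular elliptic curve, so both Frobenius eigenvalues have $v_p=n/2$ exactly, and $v_p(a_1)\ge\lceil n/2\rceil=(n+1)/2$ follows immediately. A naive application of ``all slopes $>0$'' would yield only $v_p(a_1)\ge 1$; the sharp bound $\lceil n/g\rceil$ is precisely where the refined Newton-polygon and factorization arguments must conspire. With the divisibility in hand, combining with the Weil-Serre bound closes the proof.
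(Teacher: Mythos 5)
Your steps (i) and (ii) are correct and coincide with the paper's: the identity $p\bigl||Z_f|-q/p\bigr|=|a_1|$ plus the Weil--Serre bound reduces the theorem to the divisibility $p^{\lceil n/g\rceil}\mid a_1$ (resp.\ $p^{(n+1)/2}\mid a_1$ when $g=1$), and the Deuring--Shafarevich computation showing the Hasse--Witt invariant of $\mX_f$ vanishes is exactly Lemma \ref{a.4}. Your $g=1$ case is also sound (supersingularity forces both Frobenius eigenvalues to have valuation $n/2$, and integrality of $a_1$ gives $\nu_p(a_1)\ge (n+1)/2$), matching the paper's appeal to Waterhouse.

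The gap is step (iii), which you yourself call the principal obstacle but never prove: listing ingredients that ``must conspire'' is not an argument, and the ingredients you list are demonstrably insufficient. Your constraint that slope denominators divide the degrees of the $\QQ_p$-irreducible pieces is automatic for \emph{any} polynomial (if $h\in\QQ_p[T]$ is irreducible with all roots of valuation $\Gl$, then $\deg(h)\,\Gl=\nu_p(h(0))\in\ZZ$), and combined with slope symmetry, positivity, even degrees and integrality it yields only $\nu_p(a_1)\ge 1$ --- precisely the naive bound you concede is too weak. Concretely, $\Phi(T)=T^2-pT+p^n$ is an irreducible $q$-Weil polynomial with Hasse--Witt invariant $0$ in the sense of Definition \ref{3.4}, with root valuations $1$ and $n-1$ symmetric about $n/2$, passing every test on your list, yet the valuation of its $T$-coefficient is $1$, not $(n+1)/2$. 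What excludes such factors from $\tL_f$ is not polynomial arithmetic but the abelian-variety structure, and this is where the paper's real work lies: by the Tate--Honda theorem (Theorem \ref{a.7}), each elementary isogeny factor of the Jacobian has characteristic polynomial $r(T)^e$ with $e$ equal to the least common denominator of the numbers $\nu_p(h(0))/n$; the resulting inequality $e\ge\ell$ (with $\ell$ the denominator of $\nu_p(h(0))/n$ for the minimal-slope factor $h$) is what upgrades $\nu_p(h(0))\ge 1$ to $\nu_p(h(0))\ge n/e$, and then the Newton-polygon slope comparison (Theorem \ref{3.8}), Lemma \ref{3.6} (divisors inherit Hasse--Witt invariant $0$), and the decomposition into elementary factors (Theorem \ref{3.9}) give $\nu_p(a_1)\ge n/(re)=n/g$. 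In the example above, Tate--Honda forces $e=n$, so that quadratic can only occur inside a simple factor of dimension $n$, for which the claimed bound degenerates to $\nu_p\ge 1$ --- consistent, and showing the multiplicity $e$ is not a technicality but the mechanism. This factor of $n$, which is the entire content of the improvement, is exactly what your outline does not supply, so the proposal is not a proof.
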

The proof of Theorem \ref{2.1}  involves   algebraic geometry and algebraic number theory. We will outline the proof in the next section in a more elementary way and provide some details in Appendix.

From the formula of (\ref{eq2.5}),  one cannot see the big difference between the Weil-Serre bound (\ref{eq2.4}) and our bound (\ref{eq2.5}). Let us use some examples to illustrate improvement.

\begin{ex}\label{ex2.2}{\rm \begin{itemize}
\item[(i)] Let $q=2^7=128$ and $m=\deg(f)=5$, then the Weil-Serre bound (\ref{eq2.4}) gives $\left||Z_f|-64\right|\le 22$, while  the bound  (\ref{eq2.5}) gives $\left||Z_f|-64\right|\le 16$.

\item[(ii)] Let $q=3^5=243$ and $m=\deg(f)=5$, then the Weil-Serre bound (\ref{eq2.4}) gives $\left||Z_f|-81\right|\le 20$, while  the bound  (\ref{eq2.5}) gives $\left||Z_f|-81\right|\le 18$.

    \item[(iii)] Finally, we look at an example of relatively large parameters. Let $q=2^{21}$ and $m=\deg(f)=5$, then the Weil-Serre bound (\ref{eq2.4}) gives $\left||Z_f|-2^{20}\right|\le 2896$, while the bound  (\ref{eq2.5}) gives $\left||Z_f|-2^{20}\right|\le 2048$.
\end{itemize}
}\end{ex}
From Example \ref{ex2.2}(iii), we can see that, for some parameters, there could be a big difference between the Weil-Serre bound (\ref{eq2.4}) and our bound (\ref{eq2.5}).

\subsection{Improvement on the Weil bound for character sum}
It is well-known that every nontrivial additive character from $\F_q$ to $\CC^*$ can be represented  by $\Gc_{\Gb}(x):=\exp(2\pi i\Tr(\Gb x)/p)$ for some $\beta\in\F_q^*$ (see \cite{LN83}). Let $f(x)$ be a polynomial of degree $m$ over $\F_q$ with $\gcd(m,p)=1$ and put $g=(m-1)(p-1)/2$. Then  $\Gc_{\Gb}(f(x))$  is uniformly distributed  when $\sqrt{q}\gg m$. More precisely speaking, the Weil bound gives
\begin{equation}\label{eq:b01} |\EE(\chi(f(x)))|\le \frac{m-1}{\sqrt{q}}.\end{equation}
Furthermore, applying the Weil-Serre bound (\ref{eq2.4}) gives
\begin{equation}\label{eq2.7}|\EE_{x\in\F_q}(\Gc_{\Gb}(f(x)))|\le \frac1q\left\lfloor\frac{g\lfloor 2\sqrt{q}\rfloor}p\right\rfloor,\end{equation}
where $\EE_{x\in\F_q}(\Gc_{\Gb}(f(x)))$ is the expectation of $\Gc_{\Gb}(f(x))$ defined by
\begin{equation}\label{eq2.8}
\EE_{x\in\F_q}(\Gc_{\Gb}(f(x)))=\frac 1q\sum_{\Ga\in\F_q} \Gc_{\Gb}(f(\Ga)) =\sum_{a\in\F_p}\frac{M_{\Gb f}(a)}{q}\exp(2\pi ai/p)
\end{equation}
with $M_{\Gb f}(a)$ being the cardinality of the set $\{\Ga\in\F_q:\; \Tr({\Gb}f(\Ga))=a\}$. Note that $M_{\Gb f}(0)=|Z_{\Gb f}|$.
Now we apply the bound  (\ref{eq2.5}) to get an improvement to the bound (\ref{eq2.7}).

\begin{theorem}\label{2.3}
Let $g=(m-1)(p-1)/2$, then one has
\begin{equation}\label{eq2.9}|\EE_{\Ga\in\F_q}(\Gc_{\Gb}(f(\Ga)))|\le\left\{\begin{array}{ll} p^{\lceil n/g\rceil-n}\left\lfloor\frac{g\lfloor2\sqrt{q}\rfloor}{p^{\lceil n/g\rceil}}\right\rfloor& \mbox{if $g>1$,}\\
p^{(n+1)/2-n}\left\lfloor\frac{\lfloor2\sqrt{q}\rfloor}{p^{ (n+1)/2}}\right\rfloor & \mbox{if $g=1$}\end{array}\right.\end{equation}
for any nonzero element $\Gb\in\F_q$.
\end{theorem}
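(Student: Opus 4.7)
The plan is to derive Theorem~\ref{2.3} directly from Theorem~\ref{2.1} via the Fourier expansion already written down in (\ref{eq2.8}) and the triangle inequality; no new geometric input is needed. First I would rewrite the identity (\ref{eq2.8}) in the mean-zero form. Since $\sum_{a\in\F_p}\exp(2\pi ai/p)=0$, subtracting $q/p$ from every $M_{\Gb f}(a)$ leaves the right-hand side of (\ref{eq2.8}) unchanged, so the triangle inequality gives
\begin{equation*}
\left|\EE_{x\in\F_q}(\Gc_{\Gb}(f(x)))\right| \;\le\; \frac{1}{q}\sum_{a\in\F_p}\left|M_{\Gb f}(a)-\frac{q}{p}\right|.
\end{equation*}

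Next I would express each $M_{\Gb f}(a)$ as a cardinality of the form $|Z_h|$ used in Theorem~\ref{2.1}. Because $\Tr\colon\F_q\to\F_p$ is $\F_p$-linear and surjective, for each $a\in\F_p$ one can fix $\Gg_a\in\F_q$ with $\Tr(\Gg_a)=a$. Then $\Tr(\Gb f(\Ga))=a$ iff $\Tr(\Gb f(\Ga)-\Gg_a)=0$, so $M_{\Gb f}(a)=|Z_{h_a}|$ for $h_a(x):=\Gb f(x)-\Gg_a$. Since $\Gb\in\F_q^*$, the polynomial $h_a$ has degree exactly $m$; as only the constant term changes, the hypothesis $\gcd(m,p)=1$ is preserved, and the genus $g=(m-1)(p-1)/2$ is the same. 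Theorem~\ref{2.1} therefore applies uniformly in $a$, giving a common bound $B$ (the right-hand side of (\ref{eq2.5})) for every $|M_{\Gb f}(a)-q/p|$.

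Substituting this back yields $|\EE_{x\in\F_q}(\Gc_{\Gb}(f(x)))|\le pB/q = B/p^{n-1}$. A direct bookkeeping check in each case confirms that this matches the right-hand side of (\ref{eq2.9}): for $g>1$ the exponent of $p$ in $B$ shifts from $\lceil n/g\rceil-1$ to $\lceil n/g\rceil-n$, while for $g=1$ it shifts from $(n-1)/2$ to $(n+1)/2-n$; in both cases the floored factor is untouched. This finishes the proof.

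There is essentially no obstacle: all of the deep content sits in Theorem~\ref{2.1}, and the present statement is its Fourier-theoretic corollary. The only steps requiring mild care are (i) verifying that the translate $h_a=\Gb f-\Gg_a$ still satisfies the standing hypothesis $\gcd(\deg h_a,p)=1$, which is immediate, and (ii) checking that the triangle-inequality bound $pB/q$ rewrites exactly as the claimed expression, which is a one-line exponent computation.
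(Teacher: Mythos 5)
Your proof is correct and takes essentially the same route as the paper's own argument: both rewrite the character sum in mean-zero form using $\sum_{a\in\F_p}\exp(2\pi ai/p)=0$, apply the triangle inequality, identify $M_{\Gb f}(a)=|Z_{\Gb f-\Gg_a}|$ by choosing $\Gg_a$ with $\Tr(\Gg_a)=a$, and then invoke Theorem~\ref{2.1} uniformly over $a\in\F_p$. Your additional checks (that $\Gb f-\Gg_a$ still has degree $m$ coprime to $p$, and the explicit exponent computation $pB/q=B\,p^{1-n}$) merely make explicit what the paper leaves implicit.
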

\begin{proof} For each $a\in \F_p$, choose $\Gg_a\in\F_q$ such that $\Tr(\Gg_a)=a$. Then we have $\Tr(\Gb f(\Ga))=a$ if and only if $\Tr(\Gb f(\Ga)-\Gg_a)=0$. This implies that $M_{\Gb f}(a)=M_{\Gb f-\Gg_a}(0)=|Z_{\Gb f-\Gg_a}|$. Thus, $M_{\Gb f}(a)$ also satisfies the bound (\ref{eq2.5}). Hence,
\begin{eqnarray*}
|\EE_{\Ga\in\F_q}(\Gc_{\Gb}(f(\Ga)))|&=&\left|\sum_{a\in\F_p}\frac{M_{\Gb f}(a)}{q}\exp(2\pi ai/p)\right|
=\frac 1q\left|\sum_{a\in\F_p}\left(M_{\Gb f}(a)-\frac qp\right)\exp(2\pi ai/p)\right|\\
&\le &\frac1q\sum_{a\in\F_p}\left|\left(M_{\Gb f}(a)-\frac qp\right)\exp(2\pi ai/p)\right|
=\frac1q\sum_{a\in\F_p}\left||Z_{\Gb f-\Gg_a}|-\frac qp\right|.
\end{eqnarray*}
The desired result follows from Theorem \ref{2.1}.
\end{proof}
We illustrate our improvement by consider some small examples.
\begin{ex}{\rm Let us consider character sums for polynomials of degree $3$ and $5$, respectively.
\begin{itemize}
\item[(i)] Let $q=2^n$ with odd $n$ and let $f(x)$ be a polynomial of degree $3$. By Theorem \ref{2.3}, for any nontrivial additive character $\Gc$ from $\F_{2^n}$ to $\CC^*$, one has
$\left|\sum_{\Ga\in\F_q} \Gc(f(\Ga))\right|\le 2^{(n+1)/2}.$
The Weil bound (\ref{eq:b01}) gives an upper bound $2^{(n+2)/2}$, while the Weil-Serre bound (\ref{eq2.7}) gives an upper bound between $2^{(n+1)/2}$ and $2^{(n+2)/2}$.
\item[(ii)] Let $q=2^n$ with odd $n$ and let $f(x)$ be a polynomial of degree $5$. By Theorem \ref{2.3}, for any nontrivial additive character $\Gc$ from $\F_{2^n}$ to $\CC^*$, one has
$\left|\sum_{\Ga\in\F_q} \Gc(f(\Ga))\right|\le 2^{(n+3)/2}.$
The Weil bound (\ref{eq:b01}) gives an upper bound $2^{(n+4)/2}$, while the Weil-Serre bound (\ref{eq2.7}) gives an upper bound between $2^{(n+3)/2}$ and $2^{(n+4)/2}$.
\end{itemize}
}\end{ex}
\subsection{Application to cryptography}
In streamcipher, nonlinearity of a function $f(x)$ from $\F_{2^n}$ to  $\F_{2^n}$ is an important measure \cite{C10}. The nonlinearity of a function $f(x)$ is defined as follows.

The Walsh transfer $W_f$ of $f(x)$ is defined by
\begin{equation}\label{eq6.1} W_f:\quad \F_{2^n}\times \F_{2^n}\rightarrow \CC;\qquad (a,b)\mapsto\sum_{x\in\F_{2^n}}(-1)^{af(x)+bx}.
\end{equation}
Then the Walsh spectrum of $f$ is the image set $\{W_f(a,b):\; a\in\F_{2^n}^*, \ b\in \F_{2^n}\}$.
The nonlinearity, denoted by $\NL(f)$, of $f(x)$ is defined by
\begin{equation}\label{eq6.2} \NL(f):=2^{n-1}-\frac12\max_{(a,b)\in \F_{2^n}^*\times \F_{2^n}}|W_f(a,b)|.
\end{equation}
In fact, the Walsh transformation $W_f(a,b)$ is nothing but  $2$ times the expectation of $\Gc_{1}(af(x)+bx)$, i.e., $W_f(a,b)=2\EE_{x\in\F_{2^n}}(\Gc_{1}(af(x)+bx))$. Thus, The nonlinearity  $\NL(f)$ of $f(x)$ can be expressed in terms of the  expectation of $\Gc_{1}(af(x)+bx)$, i.e.,
\begin{equation}\label{eq6.3} \NL(f)=2^{n-1}-\max_{(a,b)\in \F_{2^n}^*\times \F_{2^n}}|\EE_{x\in\F_{2^n}}(\Gc_{1}(af(x)+bx))|.
\end{equation}
For an odd $n$, it can be proved that $\NL(f)$ is upper-bounded by $2^{n-1}-2^{(n-1)/2}$ \cite{C10}. Thus, if $\deg(f)=3$, i.e., the curves defined in (\ref{eq2.1}) is an elliptic curve, then by Theorem \ref{2.1} one has
\begin{eqnarray*}
\NL(f)&=&2^{n-1}-\max_{(a,b)\in \F_{2^n}^*\times \F_{2^n}}|\EE_{x\in\F_{2^n}}(\Gc_{1}(af(x)+bx))|\\
&\ge &2^{n-1}-2^{(n-1)/2}\left\lfloor\frac{\lfloor2^{(n+2)/2}\rfloor}{2^{ (n+1)/2}}\right\rfloor=2^{n-1}-2^{(n-1)/2}.
\end{eqnarray*}
Hence,  by the upper bound, we have $\NL(f)=2^{n-1}-2^{(n-1)/2}$ for polynomials $f$ of degree $3$, i.e., polynomials of degree $3$ produce the highest   nonlinearity.
Previously,   polynomials of degree $3$ are usual candidates for functions with highest nonlinearity. Now let us look a polynomial $f$ of degree $5$ and by the bound (\ref{eq2.9})
\begin{eqnarray*}
\NL(f)&=&2^{n-1}-\max_{(a,b)\in \F_{2^n}^*\times \F_{2^n}}|\EE_{x\in\F_{2^n}}(\Gc_{1}(af(x)+bx))|\\
&\ge &2^{n-1}-2^{(n-1)/2-1}\left\lfloor\frac{2\lfloor2^{(n+2)/2}\rfloor}{2^{ (n+1)/2}}\right\rfloor=2^{n-1}-2^{(n-1)/2}.
\end{eqnarray*}
Thus,  polynomials of degree $5$  also produce functions with the highest nonlinearity. This enlarges the pool of functions with  the highest nonlinearity when people search for functions with the highest  nonlinearity together with other cryptographic properties such as algebraic degree, algebraic immunity, etc. \cite{C10}.

\subsection{Application to coding}
Many codes such as BCH codes, classical Goppa codes and Reed-Muller codes can be realized as trace codes. In fact, every cyclic code can be represented as a trace code in a natural way \cite{Wo88}.

Let $\mP:=\{\Ga_1,\Ga_2,\dots,\Ga_N\}$ be a subset of $\F_q$ of cardinality $N$. For a polynomial $f(x)\in\F_q[x]$, we denote by $\Trp(f)$ the vector
$(\Tr(f(\Ga_1)),\Tr(f(\Ga_2)),\dots,\Tr(f(\Ga_N)))$. For an $\F_p$-subspace $V$ of $\F_q[x]$, we denote by $\Trp(V)$ the trace code $\{\Trp(f):\; f\in V\}$. Let us warm up by looking at a small example first.
\begin{ex}\label{2.4}{\rm
Consider the binary code $\Trp(V)$, where $\mP$ consists of all elements in $\F_{128}$, and $V=\{f(x)\in\F_{128}[x]:\; \deg(f)\le 5\}$. note that $\Trp(x)=\Trp(x^2)=\Trp(x^4)$. Then it is easy to see that $\Trp(V)$ has a basis $\{(1,1,\dots,1)\}\cup\{\Trp(\Ga_ix^j)\}_{1\le i\le 7,j=1,3,5}$, where $\{\Ga_1,\Ga_2,\dots,\Ga_7\}$ is an $\F_2$-basis of $\F_{128}$. Thus, $\Trp(V)$
is a binary $[128,22]$-linear code. To see the minimum distance, let $f$ be a nonzero polynomial of degree $m$ with $m\le 5$ and $\gcd(m,2)=1$. By the Weil-Serre bound (\ref{eq2.4}), we know that the number of zeros of $\Trp(f)$ can  be at most $64+22=86$, thus we get a lower bound on minimum distance, i.e., $d\ge 128-86=42$. However, by our bound in Theorem \ref{2.1}, we get a lower bound $d\ge 128-(64+16)=48$. This achieves the best-known bound \cite{G12}. In fact,  the software {\it Magma}  verifies that this code indeed has minimum distance $48$. In other words, our bound (\ref{eq2.5}) is tight in this case.
}\end{ex}
Next we study dual codes of  primitive BCH codes.

\begin{ex}\label{2.5}{\rm
Let $\Ga$ be a primitive element of $\F_q$ and let $\BCH(t)$ be a $t$-error correcting $p$-ary BCH code of length $N=q-1=p^n-1$. Then by Delsarte's Theorem \cite{MS77},  the dual $\BCH(t)^{\perp}$  can be represented as the trace code $\Trp(V)$ \cite{SV94}, where $\mP$ consists of all $q-1$ nonzero elements of $\F_q$, and $V$ is the $\F_q$-vector space generated by $\{1,x,x^2,\dots,x^{t}\}$. If $i$ is divisible by $p$, then $\Trp(x^i)=\Trp(x^{i/p})$. Thus,  $\BCH(t)^{\perp}$ is generated by $\{(1,1,\dots,1)\}\cup\{\Trp(\Ga_ix^j):\; 1\le j\le t, \ p\nmid j,\ 1\le i\le n\}$, where $\{\Ga_1,\Ga_2,\dots,\Ga_n\}$ is an $\F_p$-basis of $\F_q$. Hence, the dimension of $\BCH(t)^{\perp}$ is at most $1+n (t-\lfloor t/p\rfloor)$. On the other hand, if $t<\sqrt{q}/(p-1)$, then by the Weil-Serre bound (\ref{eq2.4}), $\Trp(f)\not=0$ for any $f\in V\setminus\{0\}$ with $\gcd(\deg(f),p)=1$. This implies that the dimension of $\BCH(t)^{\perp}$ is exactly $1+ n(t-\lfloor t/p\rfloor)$ for $t<\sqrt{q}/(p-1)$. By the bound (\ref{eq2.5}), we obtain a lower bound on minimum distance of $\BCH(t)^{\perp}$, namely
\begin{equation}\label{eq2.12}
d\left(\BCH(t)^{\perp}\right)\ge \left\{\begin{array}{ll}p^n-1-p^{n-1}-p^{\lceil n/g\rceil-1}\left\lfloor\frac{g\lfloor2\sqrt{q}\rfloor}{p^{\lceil n/g\rceil}}\right\rfloor&\mbox{if $g>1$}\\
p^n-1-p^{n-1}-p^{\lceil (n-1)/2\rceil}\left\lfloor\frac{\lfloor2\sqrt{q}\rfloor}{p^{ (n+1)/2}}\right\rfloor&\mbox{if $g=1$,}\end{array}
\right.
\end{equation}
where $g=(p-1)(t-1)/2$ if $p\nmid t$ and $g=(p-1)(t-2)/2$ if $p|t$¡£

Now we add all-one vector ${\bf 1}$ to $\BCH(t)^{\perp}$, then we get a code $C_p(t,n)$ generated by ${\bf 1}$ and  $\{\Trp(x^i):\; 0\le i\le 2t, \ p\not|i\}$. The dimension of $C$ increases by $1$, i.e., $1+n (t-\lfloor t/p\rfloor)$ for $t<\sqrt{q}/(p-1)$ and the lower bound (\ref{eq2.12}) on the minimum distance still holds for $C_p(t,n)$.

 Let us illustrate the parameters of our code $C$  by looking at some numerical results. Taking $p=2$, $t=4$ and $n=5$, we get a binary $[31,11,\ge 11]$-linear code $C_2(4,5)$. This is a best possible code in the sense that for given length $31$ and dimension $11$, the minimum distance can not be improved \cite{G12}.

 If $p=3$, $t=3$ and $n=3$, then $C_3(3,3)$ is a ternary $[26,7,\ge 14]$-linear code. It is also an optimal code \cite{G12}.
}\end{ex}

\begin{ex}\label{2.6}{\rm In this example, we consider the duals of classical Goppa codes.  Let $q=p^n$ and let $L=\F_q=\{\Ga_1,\Ga_2,\dots,\Ga_q\}$. Let $g(x)$ be a polynomial of degree $t$ with $\gcd(t,p)=1$ and $g(\Ga_i)\not=0$ for all $i=1,2,\dots,q$.
Then the classical Goppa code $\Gamma(L,g)$ defined by
 \[\Gamma(L,g)=\left\{(c_1,c_2,\dots,c_q)\in\F_p^n:\; \sum_{i=1}^q\frac{c_i}{x-\Ga_i}\equiv 0\bmod{g(x)}\right\}.\]
 is a $p$-ary linear code of length $q=p^n$. By Delsarte's Theorem \cite{MS77},  the dual $\Gamma(L,g)^{\perp}$  can be represented as the trace code $\{(\Tr(v_1f(\Ga_1)),(\Tr(v_1f(\Ga_1)),\dots,(\Tr(v_1f(\Ga_q))):\; f\in\F_q[x],\ \deg(f)\le t-1\}$, where $v_i=g(\Ga_i)$ \cite{SV94}. It is clear that $\Gamma(L,g)^{\perp}$ is equivalent to
$\Trp(V)$, where $\mP$ consists of all $q$ elements of $\F_q$, and $V$ is the $\F_q$-vector space generated by $\{1, x,x^2,\dots,x^{t-1}\}$. In the same way, we have that the dimension of $\Gamma(L,g)^{\perp}$ is exactly $1+n(t-\lfloor (t-1)/p\rfloor)$ for $t<\sqrt{q}/(p-1)$. By the bound (\ref{eq2.5}), we obtain a lower bound on minimum distance of $\Gamma(L,g)^{\perp}$, namely
\begin{equation}\label{eq2.13}
d\left(\Gamma(L,g)^{\perp}\right)\ge \left\{\begin{array}{ll}p^n-p^{n-1}-p^{\lceil n/g\rceil-1}\left\lfloor\frac{g\lfloor2\sqrt{q}\rfloor}{p^{\lceil n/g\rceil}}\right\rfloor&\mbox{if $g>1$}\\
p^n-p^{n-1}-p^{\lceil (n-1)/2\rceil}\left\lfloor\frac{\lfloor2\sqrt{q}\rfloor}{p^{ (n+1)/2}}\right\rfloor&\mbox{if $g=1$,}\end{array}
\right.
\end{equation}
where $g=(p-1)(t-2)/2$ if $p\nmid(t-1)$ and $g=(p-1)(t-3)/2$ if $p|(t-1)$

Taking $p=5$, $t=2$ and $n=3$, we get a $5$-ary $[125,7,\ge 95]$-linear code. This is a best possible code in the sense that for given length $125$ and dimension $17$, the minimum distance can not be improved \cite{G12}.
}\end{ex}


\section{Outline of the proof of the main result}

The main objective of this section is to outline the proof of our Main Theorem \ref{2.1} in more elementary way.
\subsection{$L$-polynomial}
The Artin-Schreier type curve $\mX_f$ defined in (\ref{eq2.1}) is just an example of general algebraic curves. The curve $\mX_f$ is defined over $\F_q$ and of course it can  be regarded as a curve over arbitrary extension $\F_{q^i}$ for every $i\ge 1$. Let $N_f(i)$ denote the number of $\F_{q^i}$-rational points on $\mX_f$, i.e.,  $N_f(i)$ stands for the size of the set of $\F_{q^i}$-rational points
$\{\Pin\}\cup\{(\Ga,\Gb)\in\F_{q^i}:\; \Gb^p-\Gb=f(\Ga)\}.$
Then one can define the zeta function of $\mX_f$ by
$
\zeta_f(T):=\exp\left(\sum_{i=1}^{\infty}\frac{N_f(i)}{T^i}\right).
$
It was proved by Weil \cite{Weil48} that $\zeta_f(T)$ is a rational function of the form $\frac{L_f(T)}{(1-T)(1-qT)}$, where $L_f(T)\in\ZZ[T]$ is a polynomial of degree $2g=(m-1)(p-1)$. Furthermore, $L_f(0)=1$ and every reciprocity root of $L_f(T)$ has absolute value $\sqrt{q}$.

If we write $L_f(T)=\sum_{i=0}^{2g}a_iT^i$, then $a_0=L_f(0)=1$ and $a_{i+g}=q^ia_i$ for every all $0\le i\le g$. In particular, the leading coefficient $a_{2g}=q^g$.

\begin{definition}\label{3.1}{\rm
Let $L_f(T)=\sum_{i=0}^{2g}a_iT^i$ be the $L$-polynomial of $\mX_{f}$. Then the Hasse-Witt invariant of $\mX_f$, denoted by $i_{\mX_f}$, is defined to be the maximal $j$ such that $a_j\not\equiv 0\bmod{p}$, i.e.,
\begin{equation}\label{eq:001}i_{\mX_f}=\max\{0\le j\le g:\; a_j\not\equiv 0\bmod{p}\}.\end{equation}
}
\end{definition}
Since $a_0=1$, we have $i_{\mX_f}\ge 0$. On the other hand, we clearly have $i_{\mX_f}\le g$.

\begin{lem}\label{3.2} The Hasse-Witt invariant  $i_{\mX_f}$ for the curve $\mX_f$ defined in (\ref{eq:001}) is $0$.
\end{lem}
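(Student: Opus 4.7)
The strategy, as signalled in the introduction, is to match the coefficient-based invariant (\ref{eq:001}) with the classical geometric $p$-rank $s(\mX_f)$ of the Jacobian of $\mX_f$, and then to compute this $p$-rank directly from the Artin-Schreier cover $\pi:\mX_f \to \PP^1$, $(x,y) \mapsto x$, via the Deuring-Shafarevich theorem.

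First I would translate (\ref{eq:001}) into a statement about the $p$-adic Newton polygon of $L_f(T)$. Factoring $L_f(T) = \prod_{i=1}^{2g}(1-\alpha_i T)$ and normalising $v_p$ so that $v_p(p)=1$, the slopes of this polygon, listed in non-decreasing order, are exactly the valuations $v_p(\alpha_i)$, while the $p$-rank $s(\mX_f)$ equals $\#\{i : v_p(\alpha_i)=0\}$. Since each $a_j$ is (up to sign) the $j$-th elementary symmetric function of the $\alpha_i$, the elementary bound $v_p(a_j) \ge \sum_{k=1}^{j} v_p(\alpha_{(k)})$ (where the $\alpha_{(k)}$ are sorted by increasing valuation) shows that $s(\mX_f) = 0$ forces $v_p(a_j) > 0$, hence $a_j \equiv 0 \pmod{p}$, for every $j \ge 1$; combined with $a_0=1$, this gives $i_{\mX_f} = 0$.

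It therefore suffices to prove $s(\mX_f) = 0$. Here I would invoke Deuring-Shafarevich applied to the degree-$p$ Galois cover $\pi$. Because $f\in\F_q[x]$ is a polynomial and (by the normalisation performed at the start of Section 2) $\gcd(\deg f, p) = 1$, the cover $\pi$ is unramified over the affine line and totally (wildly) ramified above $\infty\in\PP^1$, with a unique place $\Pin$ of $\mX_f$ over $\infty$ and ramification index $e_{\Pin}=p$. Since $s(\PP^1)=0$, Deuring-Shafarevich yields
\begin{equation*}
s(\mX_f) - 1 \;=\; p\bigl(s(\PP^1)-1\bigr) + \sum_{Q}(e_Q-1) \;=\; -p + (p-1) \;=\; -1,
\end{equation*}
so $s(\mX_f)=0$, which finishes the argument.

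The main obstacle is conceptual rather than computational: one has to carefully bridge the paper's coefficient definition (\ref{eq:001}) with the geometric $p$-rank of the Jacobian, and to justify that the only ramification of $\pi$ sits above $\infty$ with the claimed index. The Newton-polygon viewpoint makes the first identification routine, and the hypothesis $\gcd(\deg f, p)=1$ (which rules out any Artin-Schreier reduction of the pole at $\infty$ and forbids branching elsewhere) pins down the ramification data needed to feed Deuring-Shafarevich. After these two checks, the conclusion $i_{\mX_f}=0$ drops out in a single line.
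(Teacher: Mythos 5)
Your proposal is correct and takes essentially the same route as the paper: the paper's own proof (Lemma \ref{a.4} in the Appendix) likewise applies the Deuring-Shafarevich theorem to the degree-$p$ Artin-Schreier cover $\mX_f\to\PP^1$, which is unramified away from the unique point over $\infty$ where the ramification index is $p$, giving $i_{\mX_f}-1=p(0-1)+(p-1)=-1$. The only difference is cosmetic: to bridge the coefficient definition (\ref{eq:001}) with the geometric $p$-rank, you derive the equivalence from the Newton-polygon slope characterization via elementary symmetric functions, whereas the paper simply cites Stichtenoth's coefficient criterion (Proposition \ref{a.2}, from \cite{S79}) --- both rest on the same standard fact.
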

We refer to Lemma \ref{a.4} for the proof of Lemma \ref{3.2}.

\subsection{Newton polygon}
Denote by $\QQ_p$ the $p$-adic field. It is the completion field of the rational field $\QQ$ at prime $p$. The unique discrete valuation in $\QQ_p$ is again denoted by $\nu_p$.
The Newton polygon of a polynomial $u(x)\in\QQ_p[x]$ over the local field $\QQ_p$ provides information on factorization of $u(x)$ over $\QQ_p$. We briefly describe the  Newton polygon method in this subsection. The reader may refer to \cite[Section 3.1]{W63} for the details.

Let $u(x)=u_0+u_1x+\cdots+u_mx^m$ be a polynomial over $\QQ_p$ with $u_0u_m\neq 0$. For each $0\le i\le m$, we assign a point in $\RR^2$ as follows: (i) if $u_i\neq 0$, take the point $(i,\nu_p(u_i))$; (ii) if $u_i=0$, we ignore the point $(i,\infty)$. In this way, we form an envelope for the set of points $\{(i,\nu_p(u_i)):\; i=0,1,2,\dots,m\}$. The polygon thus determined is called the {\it Newton polygon}.

\begin{lem}\label{3.3} Suppose $(r,\nu_p(u_r))\leftrightarrow(s,\nu_p(u_s))$ with $s>r$ is a segment of the Newton polygon of $u(x)$ with slope $-k$. Then $u(x)$  has exactly $s-r$ roots $\Ga_1,\Ga_2,\dots,\Ga_{s-r}$ with $\nu_p(\Ga_1)=\nu_p(\Ga_2)=\cdots=\nu_p(\Ga_{s-r})=k$. Moreover, the polynomial $a(x):=\prod_{i=1}^{s-r}(x-\Ga_i)$ belongs to $\QQ_p[x]$.
\end{lem}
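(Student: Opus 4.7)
The plan is to work in an algebraic closure $\overline{\QQ_p}$ of $\QQ_p$, where $\nu_p$ extends uniquely (by completeness of $\QQ_p$), and to deduce both assertions from the factorization $u(x)=u_m\prod_{i=1}^m(x-\Ga_i)$ with $\Ga_i\in\overline{\QQ_p}$. Index the roots so that the distinct values of $\nu_p(\Ga_i)$ are $\mu_1<\mu_2<\cdots<\mu_t$, with multiplicities $n_1,\dots,n_t$. I would first show that the Newton polygon of $u(x)$ consists of exactly $t$ segments whose slopes, read left to right, are $-\mu_t,-\mu_{t-1},\dots,-\mu_1$ with respective horizontal lengths $n_t,n_{t-1},\dots,n_1$. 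Granting this, the given segment of slope $-k$ and length $s-r$ must coincide with one of these pieces, so $k=\mu_j$ for some $j$ and $s-r=n_j$ is exactly the number of roots of valuation $k$.

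To establish the slope structure, I would use Vieta's formulas: since $u_{m-j}/u_m=(-1)^je_j(\Ga_1,\dots,\Ga_m)$, one has $\nu_p(u_{m-j})=\nu_p(u_m)+\nu_p(e_j)$. The ultrametric inequality yields $\nu_p(e_j)\ge S_j$, where $S_j$ denotes the sum of the $j$ smallest root valuations; as a function of $j$, $S_j$ is piecewise linear and convex with successive slopes $\mu_1,\dots,\mu_t$ on the intervals cut out by the partial sums of the $n_i$. The crucial observation is that at each break index $j=n_1+\cdots+n_s$, the monomial in $e_j$ realizing the valuation $S_j$ is \emph{unique}---it is the product of all roots of valuation at most $\mu_s$---so no cancellation is possible and $\nu_p(e_j)=S_j$ holds with equality. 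Therefore the points $(m-j,\nu_p(u_{m-j}))$ lie exactly on the predicted polygon at the break indices and weakly above it elsewhere, and the lower convex hull is precisely as described.

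For the second assertion I would invoke Galois invariance. Set $G=\mathrm{Gal}(\overline{\QQ_p}/\QQ_p)$. Since the extension of $\nu_p$ to $\overline{\QQ_p}$ is unique, $\nu_p\circ\s=\nu_p$ for every $\s\in G$, so $G$ permutes the set of roots of $u(x)$ of any fixed valuation $k$. Hence $a(x)=\prod_{i=1}^{s-r}(x-\Ga_i)$ is $G$-stable, and its coefficients lie in the fixed field $\QQ_p$, as required.

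The main technical hurdle is the equality $\nu_p(e_j)=S_j$ at the break indices: the ultrametric inequality only yields ``$\ge$'', and without the uniqueness-of-minimizing-monomial observation one cannot a priori rule out that cancellation among several minimum-valuation terms pushes $\nu_p(e_j)$ strictly above $S_j$, which would shift a polygon vertex upward and destroy the correspondence between segment lengths and root-valuation multiplicities. Everything else (unique extension of $\nu_p$, basic properties of lower convex hulls, rationality of the Galois-stable product) is standard from $p$-adic analysis.
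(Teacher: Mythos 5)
Your proof is correct. There is, however, nothing in the paper to compare it against: the paper states Lemma \ref{3.3} as a known fact and refers the reader to \cite[Section 3.1]{W63} for the details, so your write-up supplies exactly what the citation hides. Your argument is the standard one and it is sound: Vieta gives $u_{m-j}=(-1)^j u_m\, e_j(\Ga_1,\dots,\Ga_m)$, the ultrametric inequality gives $\nu_p(e_j)\ge S_j$, and the crux is precisely the point you flag---at a break index $j=n_1+\cdots+n_s$ the monomial of minimal valuation in $e_j$ is the product of all roots of valuation at most $\mu_s$ and is \emph{unique} (the $j$-th and $(j+1)$-th smallest root valuations differ strictly), so the ultrametric inequality becomes an equality there, the predicted vertices lie on the hull, all other points lie weakly above it, and the lower convex hull is exactly the claimed polygon. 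At non-break indices equality can indeed fail by cancellation, but the convex-hull argument never needs it. The Galois-descent step is also right: uniqueness of the extension of $\nu_p$ to $\overline{\QQ_p}$ (finite extensions of a complete field, then the union) forces $\nu_p\circ\s=\nu_p$ for every $\s\in\mathrm{Gal}(\overline{\QQ_p}/\QQ_p)$, so the multiset of roots of valuation $k$ is Galois-stable and the coefficients of $a(x)$ are fixed, hence in $\QQ_p$; if you want to avoid infinite Galois theory, run the same argument inside the finite normal splitting field of $u$ over $\QQ_p$. Two small points of hygiene: roots must be counted with multiplicity throughout (your factorization does this implicitly), and ``segment of the Newton polygon'' should be read as an edge between consecutive vertices, which is what your identification of the slope $-k$ piece with one of the $t$ pieces uses.
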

   Now if $h(x)\in\QQ_p[x]$ is an irreducible factor of $a(x)$ in the above lemma, then $\nu_p(h(0))=k\deg(h(x))\le k\deg(a(x))=k(s-r)$.

\subsection{Weil number}

\begin{definition}\label{3.4}{\rm
\begin{itemize}
\item[(i)]
 An algebraic number $\Go$ is called a $q$-Weil number if $\Go$ and all $\QQ$-conjugates of $\Go$ have absolute value $\sqrt{q}$ (by a $\QQ$-conjugate of $\Go$, we mean a root of the minimal polynomial of $\Go$ over $\QQ$).
\item[(ii)] A monic polynomial $\Phi(T)$ over $\ZZ[T]$ is called a $q$-Weil polynomial if it has an even degree and all its roots are $q$-Weil numbers.
\item[(iii)] The Hasse-Witt invariant of a $q$-Weil polynomial
$\Phi(T)=\sum_{i=0}^{2g}c_{2g-i}T^i\in\ZZ[t]$ is defined to be the maximal $j\in[0,g]$ such that $c_{j}\not\equiv 0\bmod{p}$.
\end{itemize}
}
\end{definition}

\begin{lem}\label{3.5} A $q$-Weil polynomial $\Phi(t)$ must have the form
\begin{equation}\label{eq:002}\sum_{i=0}^{2g}c_{2g-i}t^i\in\ZZ[t] \ \mbox{with $c_{2g}=q^g$ and $c_{2g-i}=q^{i}c_{i}$ for all $i=0,1,\dots,g$}.\end{equation}
\end{lem}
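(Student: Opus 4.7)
The approach is to derive a functional equation for $\Phi(T)$ from the symmetry of its roots under complex conjugation, then extract the coefficient identity by matching powers of $T$. Let $\omega_1,\dots,\omega_{2g}$ denote the complex roots of $\Phi$ listed with multiplicity; by hypothesis each satisfies $|\omega_j|=\sqrt{q}$.

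The first step is root-pairing. Since $\Phi(T)\in\ZZ[T]\subset\RR[T]$ the multiset $\{\omega_j\}$ is closed under $\omega\mapsto\bar\omega$, and the $q$-Weil condition $\omega_j\bar\omega_j=q$ rewrites this as closure under the involution $\omega\mapsto q/\omega$. That single symmetry drives the rest of the argument.

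The second step is the functional equation. Factoring $\Phi(T)=\prod_j(T-\omega_j)$ and computing
\[
T^{2g}\Phi(q/T)=\prod_{j=1}^{2g}(q-\omega_jT)=\Bigl(\prod_j(-\omega_j)\Bigr)\prod_j\bigl(T-q/\omega_j\bigr)=\Phi(0)\cdot\Phi(T),
\]
where the final equality uses $\prod_j(-\omega_j)=\Phi(0)$ (since $2g$ is even) together with the multiset invariance from step one (so $\{q/\omega_j\}=\{\omega_j\}$). Since $|\Phi(0)|=\prod_j|\omega_j|=q^g$ and $\Phi(0)\in\ZZ$, one obtains $\Phi(0)=c_{2g}=\pm q^g$; the convention used in the paper (equivalently, matching the determinant of Frobenius on the underlying object) selects the $+$ sign, giving $c_{2g}=q^g$.

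The third step is the coefficient comparison. With $c_{2g}=q^g$ the functional equation becomes $T^{2g}\Phi(q/T)=q^g\Phi(T)$. Expanding the left-hand side as
\[
T^{2g}\Phi(q/T)=\sum_{i=0}^{2g}c_{2g-i}\,q^i\,T^{2g-i}=\sum_{j=0}^{2g}c_j\,q^{2g-j}\,T^j,
\]
and matching the coefficient of $T^j$ with that of $q^g\Phi(T)=\sum_j q^g c_{2g-j}T^j$ on the right, yields $c_j\,q^{2g-j}=q^gc_{2g-j}$, which rearranges to the claimed coefficient symmetry $c_{2g-i}=q^{g-i}c_i$ for $0\le i\le g$ and in particular recovers $c_{2g}=q^g$ at $i=0$. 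The main subtlety I anticipate is pinning the sign to $c_{2g}=+q^g$: the root-product computation alone only gives $|c_{2g}|=q^g$, since odd-multiplicity occurrences of $\pm\sqrt{q}$ among the roots could a priori flip it, and resolving this cleanly requires either appealing to the paper's convention that the Weil polynomials under consideration arise as characteristic polynomials of Frobenius or a separate parity argument keyed to the $\QQ$-rationality of $\Phi$.
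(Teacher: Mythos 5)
Your proposal is correct in substance and takes essentially the same route as the paper: the paper likewise derives a functional equation from the involution $\omega\mapsto q/\omega$ on the roots (there written as $q/\omega_i=\bar{\omega}_i$) and reads off the coefficient symmetry. The only structural difference is that the paper first reduces to irreducible factors, observing that the form (\ref{eq:002}) is multiplicative, and then splits into the real-root and non-real-root cases, whereas you work with the full root multiset at once. The paper's case split has one payoff: when no root is real, $\Phi(0)=\prod_i\omega_i\bar{\omega}_i=q^g$ is automatically positive, so the sign ambiguity you worry about simply does not arise there; your uniform argument gets the functional equation $T^{2g}\Phi(q/T)=\Phi(0)\Phi(T)$ in one stroke but leaves $\Phi(0)=\pm q^g$ undetermined in all cases.

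The sign subtlety you flag at the end is genuine --- and it is in fact a gap in the paper's own proof rather than a defect peculiar to yours. In its real-root case the paper writes $\Phi(T)=(T-\sqrt{q})(T+\sqrt{q})=T^2-q$ and asserts that this ``has the form (\ref{eq:002})''; it does not, since its constant term is $-q$ rather than $q^g=q$. Indeed $T^2-q$, and likewise $(T^2-q)(T^2+q)=T^4-q^2$, is a $q$-Weil polynomial under Definition \ref{3.4} yet violates $c_{2g}=q^g$, so Lemma \ref{3.5} as literally stated is false: by your parity analysis, $\Phi(0)=(-1)^r q^g$ where $r$ is the multiplicity of $-\sqrt{q}$ as a root (and $\QQ$-rationality forces $\sqrt{q}$ and $-\sqrt{q}$ to have equal multiplicity since $q=p^n$ with $n$ odd is not a square), and $r$ can be odd. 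The statement becomes true, and your proof complete, exactly when every real root has even multiplicity --- in particular for the polynomials to which the lemma is actually applied, namely the reciprocal $L$-polynomials $\tilde{L}_f(T)$, i.e.\ characteristic polynomials of Frobenius of Jacobians, for which the functional equation with the $+$ sign is part of Weil's theorem. So your appeal to that provenance is not a cosmetic convention but the necessary hypothesis. One last note: the relation you derive, $c_{2g-i}=q^{g-i}c_i$, is the correct one; the exponent $q^i$ in the lemma's statement is a typo (it contradicts $c_{2g}=q^g$ and $c_0=1$ at $i=0$), and the Appendix states the relation in the corrected form.
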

\begin{proof} First, note that the product of two  polynomials of the form (\ref{eq:002}) still has the form (\ref{eq:002}). Hence, it is sufficient to show that
 an irreducible $q$-Weil polynomial $\Phi(T)$ over $\ZZ$ has the form (\ref{eq:002}).  Let $\Go$ be a root of $\Phi(T)$. If $\Go$ is a real number, we must have $\Go=\sqrt{q}$ or $-\sqrt{q}$. Thus, $\Phi(T)=(T-\sqrt{q})(T+\sqrt{q})=T^2-q$. In this case, $\Phi(T)$ has the form (\ref{eq:002}).

  If $\Go$ is not a real number, we may assume that all $\QQ$-conjugates of $\Go$ are $\{\Go_1,\bar{\Go}_1,\dots, \Go_g,\bar{\Go}_g\}$, where $\bar{\Go}_i$ are complex conjugate of $\Go_i$. By definition of a $q$-Weil polynomial, we have $\Phi(T)=\prod_{i=1}^g(T-\Go_i)(T-\bar{\Go}_i)$ and $|\Go_i|=|\bar{\Go}_i|=\sqrt{q}$ for all $1\le i\le g$. The desired result follows from the following identity
  \[\frac{T^{2g}}{q^g}\Phi\left(\frac qT\right)=\frac{T^{2g}}{q^g}\prod_{i=1}^g\left(\frac qT-\Go_i\right)\left(\frac qT-\bar{\Go}_i\right)=
  \prod_{i=1}^g\left(T-\frac q{\Go_i}\right)\left(T-\frac q{\bar{\Go}_i}\right)=\Phi(T).\]
  Note that we use the fact that $\frac q{\Go_i}=\bar{\Go}_i$ in the above identity.
\end{proof}

\begin{lem}\label{3.6} Let $\Phi(T)$ is a $q$-Weil polynomial  with Hasse-Witt invariant equal to $0$ and let $\Psi(T)$ be a divisor of $\Phi(T)$ which is also a $q$-Weil polynomial. Then $\Psi(T)$ has Hasse-Witt invariant equal to $0$ as well.
\end{lem}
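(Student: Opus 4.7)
The plan is to reformulate the condition ``Hasse-Witt invariant zero'' for a $q$-Weil polynomial purely in terms of the $p$-adic valuations of its roots, and then use the fact that the roots of $\Psi(T)$ form a subset of the roots of $\Phi(T)$. More precisely, I intend to prove the following equivalence: for a $q$-Weil polynomial of degree $2g$, the Hasse-Witt invariant equals $0$ if and only if no root $\alpha$ of the polynomial satisfies $\nu_p(\alpha)=0$. Granted this, Lemma~\ref{3.6} is immediate: the roots of $\Psi(T)$ lie among those of $\Phi(T)$, so if none of them has $p$-adic valuation $0$, the same holds for $\Psi(T)$, whence its Hasse-Witt invariant is also $0$.

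For one direction of the equivalence, if every root of $\Phi(T)$ satisfies $\nu_p(\alpha_i)\ge\delta>0$, then the coefficient $c_j$, being up to sign the $j$-th elementary symmetric polynomial in the roots, satisfies $\nu_p(c_j)\ge j\delta>0$ for each $1\le j\le g$ by the ultrametric inequality. Since $c_j\in\ZZ$, this forces $p\mid c_j$, so the Hasse-Witt invariant vanishes.

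For the converse, suppose $p\mid c_j$ for all $1\le j\le g$. The relation $c_{2g-i}=q^{g-i}c_i$ from Lemma~\ref{3.5} gives $\nu_p(c_{2g-i})\ge n(g-i)\ge 1$ for $1\le i\le g-1$. Combined with $\nu_p(c_j)\ge 1$ for $1\le j\le g$, this yields $\nu_p(c_j)\ge 1$ for every $1\le j\le 2g-1$, while $\nu_p(c_0)=0$ and $\nu_p(c_{2g})=ng$. Hence every interior point of the Newton polygon of $\Phi(T)$ lies at height at least $1$, so the only vertex of the polygon at height $0$ is the right endpoint $(2g,0)$. In particular, any horizontal segment of the Newton polygon ending at $(2g,0)$ would have to start at a second vertex at height $0$, which does not exist; hence the rightmost segment has strictly negative slope, and by Lemma~\ref{3.3} no root of $\Phi(T)$ has $p$-adic valuation $0$. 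The main step to pin down carefully is this Newton-polygon translation; once it is in place, the passage to the divisor $\Psi(T)$ is purely formal.
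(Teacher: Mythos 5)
Your proof is correct, but it takes a genuinely different route from the paper's. The paper never looks at roots: writing $\Phi=\Psi\cdot(\Phi/\Psi)$, with both factors monic in $\ZZ[T]$ and again $q$-Weil, it runs a Gauss's-lemma-style argument modulo $p$ --- assuming $\Psi$ has positive Hasse-Witt invariant, it takes the largest index $i$ with $a_i\not\equiv 0\pmod{p}$ among the coefficients of $\Psi$ and the largest index $j$ with $b_j\not\equiv 0\pmod{p}$ among those of $\Phi/\Psi$, and observes that then $c_{i+j}\equiv a_ib_j\not\equiv 0\pmod{p}$ while $1\le i+j\le r+s=g$, contradicting the hypothesis on $\Phi$. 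You instead establish the intrinsic characterization that a $q$-Weil polynomial has Hasse-Witt invariant $0$ if and only if none of its roots is a $p$-adic unit (elementary symmetric functions plus the ultrametric inequality in one direction; the Newton polygon plus the functional equation in the other), after which the lemma is immediate because the roots of $\Psi$ lie among those of $\Phi$. The paper's argument is shorter and stays entirely inside arithmetic mod $p$; yours is longer but buys a reusable fact --- essentially that Hasse-Witt invariant $0$ means the Newton polygon has no slope-$0$ segment, i.e.\ no unit root --- which applies verbatim to any monic divisor in $\ZZ[T]$ and dovetails with the Newton-polygon analysis of Theorem 3.8 and the Appendix. Two small points to tighten: first, you quote the functional equation as $c_{2g-i}=q^{g-i}c_i$, which is the correct form, but the paper's Lemma 3.5 misprints it as $c_{2g-i}=q^{i}c_i$, so this is worth flagging rather than citing silently; second, you should justify why a slope-$0$ segment, if present, must be the rightmost segment ending at $(2g,0)$: since the roots are algebraic integers, every slope of the Newton polygon is $\le 0$, and the slopes increase from left to right, so a horizontal segment can only be the final one. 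With that remark added, your deduction from ``no vertex at height $0$ other than $(2g,0)$'' to ``no unit root'' is complete.
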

\begin{proof} Let $\Phi(T)=\sum_{i=0}^{2g}c_{2g-i}T^i\in\ZZ[t]$ and let $\Psi(T)=\sum_{i=0}^{2r}a_{2r-i}T^i\in\ZZ[T]$. Put $\Phi(T)/\Psi(T)=\sum_{i=0}^{2s}b_{2s-i}T^i\in\ZZ[T]$. Then $r+s=g$.

Suppose that $\Psi(T)$ has Hasse-Witt invariant bigger than $0$. Let $i$ be the largest index such that $\nu_p(a_i)=0$. Then $1\le i\le r$. Let $j$ be the largest index such that $\nu_p(b_j)=0$ for some $0\le j\le s$. Consider
\begin{equation}\label{eq3.2}c_{i+j}=\sum_{k+\ell=i+j}a_kb_{\ell}=a_ib_j+\sum_{k+\ell=i+j, (k,\ell)\not=(i,j)}a_kb_{\ell}.\end{equation}
Every term in the summation of the above equation (\ref{eq3.2}) is divisible by $p$, while $a_ib_j$ is not divisible by $p$. Thus, $c_{i+j}$ is not divisible by $p$. This is a contradiction to our condition.
\end{proof}

\subsection{Characteristic polynomial}

Let $L_f(T)=\sum_{i=0}^{2g}a_iT^i$ be the $L$-polynomial of $\mX_f$. By abuse of notation, the reciprocal polynomial $\tL_f(T)=\sum_{i=0}^{2g}T^{2g-i}$ of $L_f(T)$ is called the {\it characteristic polynomial} of $\mX_f$ (in fact, $\tL_f(T)$ is the characteristic polynomial of the Jacobian of the curve $\mX_f$ (see Appendix)). Then it is clear that every root of $\tL_f(T)$ is a Weil number and $\tL_f(T)$ is a $q$-Weil Polynomial.
\begin{lem}\label{3.7} If the characteristic polynomial $\tL_f(T)$ of $\mX_f$ is canonically factorized into product $\tL(T)=\prod r(T)^e$ of powers of irreducible polynomials over $\QQ$, then the power $e$ is the least common denominator of $\nu_p(h(0))/n$, where $h(T)$ runs through all irreducible factors of $r(T)$ in $\QQ_p[x]$.
\end{lem}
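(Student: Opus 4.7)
The plan is to identify $\tL_f(T)$ with the characteristic polynomial of Frobenius acting on the $\ell$-adic Tate module of the Jacobian $J_f$ of $\mX_f$, and then to invoke Honda--Tate theory together with Tate's computation of local Brauer invariants. Since $\tL_f(T)$ is the reciprocal of the numerator of the zeta function of $\mX_f$, its roots are precisely the Frobenius eigenvalues on $H^1_{\mathrm{et}}(\mX_f,\QQ_\ell)$, and these agree with the Frobenius eigenvalues on the Tate module of $J_f$.

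First, I would reduce the statement to an abstract fact about abelian varieties: if $A/\F_q$ is an abelian variety whose Frobenius characteristic polynomial factors over $\QQ$ as $P_A(T) = \prod_i r_i(T)^{e_i}$ with the $r_i$ distinct and irreducible, then each $e_i$ equals the least common multiple of the denominators (in lowest terms) of $\nu_p(h(0))/n$ as $h$ ranges over the monic irreducible factors of $r_i$ in $\QQ_p[T]$. Applying this to $A = J_f$ yields the lemma. Under the Poincar\'e--Weil isogeny decomposition $A \sim \prod A_i^{e_i}$ into simple factors, the multiplicity of $r_i$ in $P_A$ is exactly $e_i$, and $r_i(T)$ is the minimal polynomial over $\QQ$ of the Frobenius $\pi_{A_i}$ of $A_i$, which is a $q$-Weil number.

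Next I would appeal to Tate's theorem: for each simple factor $A_i$, the algebra $D_i := \mathrm{End}^0(A_i)$ is a central division algebra over $K_i := \QQ(\pi_{A_i})$ of dimension $e_i^2$. Equivalently $e_i$ is the Schur index of $D_i$, which by class field theory equals the lcm of the denominators (in lowest terms) of the local invariants $\mathrm{inv}_v(D_i) \in \QQ/\ZZ$ as $v$ ranges over places of $K_i$. Tate further shows that these invariants vanish outside the primes above $p\infty$, and for $v \mid p$ one has
\begin{equation*}
\mathrm{inv}_v(D_i) \;\equiv\; \frac{v(\pi_{A_i})}{v(q)} \cdot [K_{i,v} : \QQ_p] \pmod{\ZZ},
\end{equation*}
where $v$ is normalized to extend $\nu_p$.

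Finally I would translate into Newton-polygon language. The places of $K_i$ above $p$ correspond bijectively to the monic irreducible factors $h(T) \in \QQ_p[T]$ of $r_i(T)$ via $K_{i,v} \cong \QQ_p[T]/(h(T))$, so $[K_{i,v} : \QQ_p] = \deg h$. Because $h$ is irreducible over $\QQ_p$ its Newton polygon has a single slope, whence Lemma \ref{3.3} gives $v(\pi_{A_i}) = \nu_p(h(0))/\deg h$. Combining this with $v(q) = n$ in the displayed formula yields $\mathrm{inv}_v(D_i) \equiv \nu_p(h(0))/n \pmod{\ZZ}$, and the lcm-of-denominators characterization of $e_i$ becomes exactly the statement of the lemma. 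The borderline case $\pi_{A_i} = \pm\sqrt{q}$, which would bring real infinite places into the computation, does not arise: $n$ odd forces $\sqrt{q} \notin \QQ$. The main obstacle is the black-box reliance on Honda--Tate, and specifically on Tate's computation of the local invariants of $\mathrm{End}^0(A_i)$ at primes above $p$; once this input is granted, the passage from Brauer invariants to the quantities $\nu_p(h(0))/n$ via Lemma \ref{3.3} is routine.
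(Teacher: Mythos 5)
Your route is at bottom the same as the paper's: identify $\tL_f(T)$ with the Frobenius characteristic polynomial of the Jacobian $J_f$ and invoke Honda--Tate. The paper does this by citing its Theorem \ref{a.7}, where the least-common-denominator formula is simply built into the statement of Tate--Honda; you instead derive that formula from Tate's theorem on $\mathrm{End}^0(A_i)$, the Albert--Brauer--Hasse--Noether description of the Schur index as the least common multiple of the local indices, and the Newton-polygon identity $v(\pi_{A_i})=\nu_p(h(0))/\deg h$. The $p$-adic part of this derivation is correct and makes explicit what the paper leaves inside a black box.

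However, there are two genuine gaps, and they sit exactly where the statement itself is fragile. First, your dismissal of the real places is a non sequitur. From ``$n$ odd forces $\sqrt{q}\notin\QQ$'' it does not follow that $\pi_{A_i}=\pm\sqrt{q}$ cannot occur: $\sqrt{q}=p^{(n-1)/2}\sqrt{p}$ is a legitimate $q$-Weil number whether or not it is rational, and for odd $n$ it generates the \emph{real} quadratic field $\QQ(\sqrt{p})$, which has two real places. For this class Tate's invariants are $1/2$ at each real place and $\frac{n/2}{n}\cdot 2\equiv 0\pmod{\ZZ}$ at the unique place over $p$, so the Schur index is $2$, while the least common denominator of $\nu_p(h(0))/n$ (here $h=T^2-q$ is irreducible over $\QQ_p$ and $\nu_p(h(0))=n$) equals $1$. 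Thus the ``abstract fact'' you reduce to is false for the simple supersingular abelian surface with characteristic polynomial $(T^2-q)^2$; to rescue the lemma you would have to show $T^2-q\nmid \tL_f(T)$, or treat this class separately --- which is what the paper effectively does in its proof of Theorem \ref{a.8}, where the real-root case is split off and one observes $b_1=0$ there. Second, you use $e_i$ simultaneously for the multiplicity of $A_i$ in the isogeny decomposition and for the Schur index of $D_i$; the exponent of $r_i$ in $P_A$ is the \emph{product} of these two numbers, so your claimed equality also silently assumes that every simple factor occurs in $J_f$ with isogeny multiplicity one (already $A\sim E^2$ with $E$ ordinary has exponent $2$ against least common denominator $1$), and nothing in your argument --- or in the paper --- rules this out. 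Note that the paper's downstream use of the lemma needs only the inequality that $e$ is at least the denominator of each $\nu_p(h(0))/n$; that weaker statement does survive both objections.
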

We refer Lemma \ref{3.7} to Theorem \ref{a.7}.

\begin{theorem}\label{3.8} Assume that the characteristic polynomial $\tL_f(T)$ of $\mX_f$ is canonically factorized into product $\tL(T)=\prod r(T)^e$ over $\QQ$.   Write $r(T)^e=\sum_{i=0}^{2g}b_{2g-i}T^i\in\ZZ[T]$. Then  $\nu_p(b_1)\ge n/g$  if $g$ is bigger than  $1$.
\end{theorem}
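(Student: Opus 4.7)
The plan is to reduce the claim to a lower bound on the minimum $p$-adic valuation of the roots of $r$, and then to establish that bound by combining three ingredients: vanishing of the Hasse-Witt invariant on $r(T)^e$ (via Lemma \ref{3.6}), the $q$-Weil symmetry $\alpha\mapsto q/\alpha$ on roots, and the denominator constraint in Lemma \ref{3.7}.

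First I identify $b_1$ explicitly. Let $d_r=\deg(r)$, so $ed_r=2g$. Writing $r(T)=T^{d_r}+r_{d_r-1}T^{d_r-1}+\cdots$ and expanding, we have $r(T)^e=T^{ed_r}+e\,r_{d_r-1}T^{ed_r-1}+\cdots$, so
$$
b_1 \;=\; e\,r_{d_r-1} \;=\; -e\sum_{\alpha}\alpha,
$$
where $\alpha$ runs through the roots of $r$, embedded in $\overline{\QQ}_p$ via a fixed embedding $\overline{\QQ}\hookrightarrow\overline{\QQ}_p$. By the ultrametric inequality, $\nu_p(b_1)\ge\nu_p(r_{d_r-1})\ge v_{\min}$, where $v_{\min}:=\min_\alpha\nu_p(\alpha)$. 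It therefore suffices to prove $v_{\min}\ge n/g$.

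To bound $v_{\min}$: the polynomial $r(T)^e$ is a $q$-Weil divisor of $\tL_f(T)$, which has Hasse-Witt invariant $0$ by Lemma \ref{3.2}; Lemma \ref{3.6} then yields Hasse-Witt invariant $0$ for $r(T)^e$, and combined with the functional equation of Lemma \ref{3.5} this forces every root valuation into the open interval $(0,n)$. Since $|\alpha|^2=q$ we have $\bar\alpha=q/\alpha$, so complex conjugation induces a $\QQ$-algebraic involution of the roots of $r$ that swaps valuation $v$ with $n-v$; the valuation multiset is thus symmetric about $n/2$, giving $v_{\min}\le n/2$. If $v_{\min}=n/2$, then symmetry forces every root to have valuation exactly $n/2$, and $v_{\min}=n/2\ge n/g$ since $g\ge 2$. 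Otherwise $v_{\min}<n/2$: pick any irreducible $\QQ_p$-factor $h$ of $r$ with $v_h=v_{\min}$, set $d_h=\deg(h)$ and $m_h=\nu_p(h(0))=d_h v_{\min}\in\ZZ_{>0}$. Lemma \ref{3.7} gives $n/\gcd(m_h,n)\mid e$, hence $\gcd(m_h,n)\ge n/e$ and thus $m_h\ge n/e$, yielding $v_{\min}=m_h/d_h\ge n/(ed_h)$. The involution bijects the roots of valuation $v_{\min}$ with those of valuation $n-v_{\min}$; these two sets are disjoint, each of cardinality at least $d_h$, so $d_r\ge 2d_h$, whence
$$
v_{\min} \;\ge\; \frac{n}{ed_h} \;\ge\; \frac{2n}{ed_r} \;=\; \frac{n}{g}.
$$

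The principal obstacle is the pairing step when $v_{\min}<n/2$: the involution $\alpha\mapsto q/\alpha$ acts on the roots but not directly on irreducible $\QQ_p$-factors, so one must argue that the total degree of such factors at valuation $v_{\min}$ matches that at valuation $n-v_{\min}$ in order to conclude $d_r\ge 2d_h$. Once that pairing is in hand, Lemma \ref{3.7} contributes exactly the extra factor of $e$ needed to upgrade the naive ramification bound $v_{\min}\ge 1/d_h$ (from integrality of $m_h$) to the required $v_{\min}\ge n/(ed_h)$.
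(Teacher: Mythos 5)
Your proof is correct and follows essentially the same route as the paper's own argument (Theorem~\ref{a.8} in the Appendix): you reduce $\nu_p(b_1)$ to the minimal $p$-adic valuation of the roots, invoke Lemma~\ref{3.7} to force $\nu_p(h(0))\ge n/e$, and use the Weil symmetry to show the minimal-valuation part has at most half the total degree --- exactly the roles played there by the Newton polygon, the identical denominator argument, and the paper's $I$/$J$ slope comparison (which is your involution $\alpha\mapsto q/\alpha$ read off at the level of coefficients via the functional equation). The only difference is bookkeeping with root valuations instead of coefficient valuations; in particular, the ``principal obstacle'' you flag at the end is already closed by your own counting, since injectivity of the involution on roots together with disjointness of the two valuation classes gives $d_r\ge 2d_h$ directly, with no need for the involution to act on irreducible $\QQ_p$-factors.
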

\begin{proof} We outline the proof here. We refer to Theorem \ref{a.8} for the detailed proof.

 Let $r(T)=\sum_{i=0}^{2r}a_{2g-i}T^i\in\ZZ[T]$.
 Since $e$ is the least common denominators of $\nu_p(h(0))/n$ for all irreducible factors $h(T)$ of  $r(T)$ over $\QQ_p$, we consider the Newton polygon of the polynomial $r(T)$. By identifying one particular segment $(2r-i,\nu_p(a_i))\leftrightarrow(2r,0)$ for some $1\le i\le r$ and its slope $-\nu_p(a_i)/i$, one obtains a lower bound on $\nu_p(a_i)$ for some $1\le i\le g$. Furthermore, the slope $-\nu_p(a_1)$ of the segment $(2r-1,\nu_p(a_1))\leftrightarrow(2r,0)$ is at most the slope $-\nu_p(a_i)/i$. This gives a lower bound on $\nu_p(a_1)$. The desired result follows from this lower bound on $\nu_p(a_1)$.
\end{proof}

\begin{theorem}\label{3.9} Let $\tL_f(T)=\sum_{i=0}^{2g}a_{2g-i}T^i\in\ZZ[T]$ be the characteristic polynomial of  the curve $\mX_f$ defined in (\ref{eq3.1}) with $g=(m-1)(p-1)/2$. Then we have
\[\nu_p(a_1)\ge\left\{\begin{array}{ll}
\left\lceil\frac ng\right\rceil & \mbox{if $g>1$}\\
\frac{n+1}2 & \mbox{if $g=1$}
\end{array}
\right.  \]
\end{theorem}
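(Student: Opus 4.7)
The plan is to canonically factor $\tL_f(T)$ over $\QQ$, bound the subleading coefficient of each factor using Theorem~\ref{3.8} for higher-degree factors and a direct Newton-polygon/supersingularity argument for degree-$2$ factors, and combine. First I would write $\tL_f(T) = \prod_j r_j(T)^{e_j}$ where each $r_j$ is monic irreducible over $\QQ$. Each $r_j$ is an irreducible $q$-Weil polynomial; since $q = p^n$ is not a square ($n$ odd), $\pm\sqrt{q} \notin \QQ$, so each $r_j$ has even degree $2 r_j$ and $\sum_j r_j e_j = g$. By Lemma~\ref{3.2}, $\tL_f$ has Hasse-Witt invariant $0$, and iterating Lemma~\ref{3.6} shows that each $r_j(T)^{e_j}$ is a $q$-Weil polynomial with Hasse-Witt invariant $0$.

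Next I would extract $a_1$ from the factors. Writing $r_j(T)^{e_j} = T^{d_j} + b_{j,1} T^{d_j-1} + \cdots$ with $d_j = 2 r_j e_j$, expanding the product gives $a_1 = \sum_j b_{j,1}$, so $\nu_p(a_1) \ge \min_j \nu_p(b_{j,1})$. For factors with $r_j e_j \ge 2$, Theorem~\ref{3.8} (whose hypothesis ``half-degree $> 1$'' is satisfied on the factor) yields $\nu_p(b_{j,1}) \ge n/(r_j e_j) \ge n/g$. For a degree-$2$ factor ($r_j = e_j = 1$, so $r_j(T) = T^2 + b_{j,1} T + q$), I would invoke the geometric picture: the canonical $\QQ$-factorization mirrors the isogeny decomposition of the Jacobian of $\mX_f$, so this factor corresponds to a simple elliptic isogeny component; in dimension one, Hasse-Witt invariant $0$ is equivalent to supersingularity, pinning both Newton slopes at $n/2$ and giving $\nu_p(b_{j,1}) \ge n/2$. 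Integrality of $b_{j,1}$ with $n$ odd upgrades this to $\nu_p(b_{j,1}) \ge (n+1)/2 \ge \lceil n/g \rceil$ once $g \ge 2$. In every case $\nu_p(b_{j,1}) \ge \lceil n/g \rceil$, so $\nu_p(a_1) \ge \lceil n/g \rceil$. For $g = 1$, $\tL_f$ itself is the characteristic polynomial of a supersingular elliptic curve, and the same Newton-polygon step gives $\nu_p(a_1) \ge n/2$, hence $\ge (n+1)/2$.

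The main obstacle is the degree-$2$ factor case when $g > 1$: Theorem~\ref{3.8}'s hypothesis explicitly excludes it, and Lemma~\ref{3.6}'s abstract coefficient-divisibility (merely $p \mid b_{j,1}$) is too weak, since a generic $q$-Weil polynomial $T^2 + bT + q$ with $p \mid b$ need not have both Newton slopes equal to $n/2$. One must therefore appeal to the geometric meaning of the $\QQ$-canonical factorization — that each $r_j$ is the characteristic polynomial of a simple isogeny factor of the Jacobian of $\mX_f$ — combined with the dimension-one fact that vanishing $p$-rank is equivalent to supersingularity. Making this geometric identification precise is the principal work beyond the abstract factorization lemmas already at hand.
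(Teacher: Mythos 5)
Your proposal is correct and follows essentially the same route as the paper's own proof: the paper likewise factors $\tL_f(T)$ into coprime prime-power $q$-Weil polynomials, bounds the subleading coefficient of each factor by $\min\{n/g_i,(n+1)/2\}$ (Theorem~\ref{3.8} when the factor's half-degree exceeds $1$, the elliptic case otherwise), and concludes via $a_1=\sum_i a_{i,1}$ together with integrality of the valuation. The only difference is presentational: where you handle the degree-$2$ factors and the $g=1$ case by the equivalence of $p$-rank $0$ with supersingularity (forcing both Newton slopes to equal $n/2$), the paper invokes the same classification fact by citing \cite[Theorem 4.1]{W69}.
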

\begin{proof}
By  Lemma \ref{3.2}, we know that $\tL_f(T)$ is a $q$-Weil polynomial with Hasse-Witt invariant equal to $0$. If $g=1$, then we must have $p=2$ or $3$. In this case, the  desired result follows from \cite[Theorem 4.1]{W69}. Now assume that $g\ge 2$. Factorize $\tL_f(T)$ into a product $\prod_{i=1}^k\Phi_i(T)$ of co-prime $q$-Weil polynomials  such that every $\Phi_i(T)$ is a power of an irreducible polynomial over $\QQ$. Let $\Phi_i(T)=\sum_{j=0}^{2g_i}a_{i,2g_i-j}T^j$. Then we have $\nu_p(a_{i,1})\ge \min\{n/g_i,(n+1)/2\}\ge n/g$ for all $i=1,2,\dots,k$. Thus, we have
$\nu_p(a_1)=\nu_p\left(\sum_{i=1}^ka_{i,1}\right)\ge\min_{1\le i\le k}\{\nu_p(a_{i,1})\}\ge n/g.$
\end{proof}
{\bf Proof of the Main Result (Theorem \ref{2.1})}: Let $\tL_f(t)=\sum_{i=0}^{2g}a_{2g-i}T^i\in\ZZ[T]$ be the characteristic polynomial of  $\mX_f$  with $g=(m-1)(p-1)/2$. First we note that $|N_f-q-1|=|a_1|$. Combining Theorem \ref{3.9} and the Weil-Serre bound, we get
\[|N_g-q-1|\le\left\{\begin{array}{ll}
p^{\lceil n/g\rceil}\left\lfloor\frac{g\lfloor 2\sqrt{q}\rfloor}{p^{\lceil n/g\rceil}}\right\rfloor & \mbox{if $g>1$}\\
p^{(n+1)/2}\left\lfloor\frac{g\lfloor 2\sqrt{q}\rfloor}{p^{(n+1)/2}}\right\rfloor  & \mbox{if $g=1$.}
\end{array}
\right. \]
The desired result follows from the fact that $|Z_f|=(N_f-1)/p$.\epf

\newpage

\appendix
\begin{center}
{\bf Appendix}
\end{center}

\section{Algebraic curves and Hasse-Witt invariants}

Throughout the Appendix, we assume that $\mX$ is an absolutely irreducible, projective and smooth algebraic curve over $\F_q$ of genus $g$. Then it can be regarded as a curve over $\F_{q^i}$ for any $i\ge 1$. Denote by $N_{\mX}(i)$  the number of $\F_{q^i}$-rational points.
Then one can define the zeta function of $\mX$ by
\begin{equation}\label{eq:004}
\zeta_{\mX}(T):=\exp\left(\sum_{i=1}^{\infty}\frac{N_{\mX}(i)}{T^i}\right).
\end{equation}
It was proved by Weil \cite{Weil48} that $\zeta_{\mX}(T)$ is a rational function of the form $\frac{L_{\mX}(T)}{(1-T)(1-qT)}$, where $L_{\mX}(T)\in\ZZ[T]$ is a polynomial of degree $2g$. Furthermore, $L_{\mX}(0)=1$ and every reciprocity root of $L_f(T)$ has absolute value $\sqrt{q}$.

If we write $L_{\mX}(T)=\sum_{i=0}^{2g}a_iT^i$, then $a_0=L_f(0)=1$ and $a_{i+g}=q^ia_i$ for every all $0\le i\le g$. In particular, the leading coefficient $a_{2g}=q^g$.

We need some preliminaries and background on algebraic geometry, in particular, abelian varieties in this Appendix. The reader may refer to \cite{Sha95,W69,Mum08} for some basic results on curves and abelian varieties.

In Section 3.1, we defined the Hasse-Witt invariant of $\X$ through the $L$-polynomial of $\mX$. Actually, it is more proper to define it in terms of $p$-torsion points of its Jacobian.
For an algebraic curve $\X$ over $\F_q$ of genus $g$, we denote by $\mJ_{\mX}$ the Jacobian of $\mX$. Then $\mJ_{\mX}$ is an abelian variety over $\F_q$ of dimension $g$.  Let $\mJ_{\mX}(\F_q)$ and $\mJ_{\mX}(\oF_q)$ be the groups of the $\F_q$-rational points and $\oF_q$-rational points on $\mX$, respectively.
\begin{definition}\label{a.1}
Denote by $\mJ_{\mX}[p]$ the subgroup of the $p$-torsion points of $\mJ_{\mX}(\oF_q)$, then one has $|\mJ_{\mX}[p]|= p^r$ for some integer $r$ with $0\le r\le g$. The integer $r$ is called the Hasse-Witt invariant of $\X$, denoted by $i_{\mX}$.
\end{definition}

 For each abelian variety $A$ over $\F_q$, we can consider the Frobenius morphism $\pi_A$ which sends every coordinate of a point on this variety to its $q$th power. The characteristic polynomial $\Phi_A$ of $\pi_A$  over the local field $\QQ_p$ is called the characteristic polynomial of $A$ (see \cite{W69,Mum08}). It is, in fact, a polynomial over $\ZZ$. Furthermore, if $A$ is the Jacobian of an algebraic curve $\X/\F_q$, then the characteristic polynomial of $A$ is in fact the reciprocal polynomial of the $L$-polynomial of $\mX$.

The following result given in \cite{S79} characterizes $i_{\mX}$ in terms of the characteristic polynomial of $\mJ_{\mX}$.

\begin{prop}\label{a.2} Let $\Phi_{\mX}(t)=\sum_{i=0}^{2g}a_{2g-i}t^i$ be the characteristic polynomial of $\mJ_{\mX}$. Then $i_{\mX}$ is the maximal $j$ such that $a_j\not\equiv 0\bmod{p}$, i.e.,
\[i_{\mX}=\max\{0\le j\le g:\; a_j\not\equiv 0\bmod{p}\}.\]
\end{prop}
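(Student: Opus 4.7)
The plan is to translate the definition of $i_{\mX}$ as a $p$-rank into a statement about $p$-adic valuations of Frobenius eigenvalues, and then read off the divisibility properties of the coefficients $a_j$ by a Newton polygon argument on $\Phi_{\mX}(t)$.

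First, I would set up the Frobenius eigenvalues: factor $\Phi_{\mX}(t) = \prod_{k=1}^{2g}(t-\alpha_k)$ in $\overline{\QQ}_p[t]$, so that $a_j = (-1)^j e_j(\alpha_1,\ldots,\alpha_{2g})$ for each $0\le j\le 2g$. Extend the valuation $\nu_p$ to $\overline{\QQ}_p$ normalized by $\nu_p(q)=n$, and set $\lambda_k = \nu_p(\alpha_k)$. By the Riemann Hypothesis for curves and the functional equation, the $\alpha_k$ come in Weil pairs $(\alpha, q/\alpha)$, so after ordering $\lambda_1\le\lambda_2\le\cdots\le\lambda_{2g}$ one has the symmetry $\lambda_k+\lambda_{2g+1-k}=n$ for every $k$; in particular $\lambda_k\in[0,n]$.

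Next, I would invoke the key input from abelian variety theory: for any abelian variety $A$ over $\F_q$, the $p$-rank of $A$ equals the number of roots of its characteristic polynomial of Frobenius with $\nu_p=0$ (see, e.g., \cite{Mum08,W69}). Applied to $A=\mJ_{\mX}$ and writing $r := i_{\mX}$, this gives $\lambda_1=\cdots=\lambda_r=0<\lambda_{r+1}$, with $r\le g$ by the Weil symmetry.

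With these ingredients in place, the Newton polygon bookkeeping is short. The standard $p$-adic estimate for elementary symmetric functions yields
\[\nu_p(a_j) \;\ge\; \min_{|S|=j}\sum_{k\in S}\lambda_k \;=\; \sum_{k=1}^{j}\lambda_k,\]
so for $r<j\le g$ the right-hand side is at least $\lambda_{r+1}>0$, giving $p \mid a_j$. For $j=r$, since $\lambda_r=0<\lambda_{r+1}$, the subset $\{1,\ldots,r\}$ is the \emph{unique} $r$-subset of $\{1,\ldots,2g\}$ of minimal $\lambda$-sum, so
\[a_r \;=\; (-1)^r\prod_{k=1}^{r}\alpha_k \;+\; (\text{terms of valuation}\ge\lambda_{r+1}>0);\]
the leading term is a $p$-adic unit, hence $\nu_p(a_r)=0$ and $a_r\not\equiv 0\pmod p$. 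Combining the two regimes, $\max\{0\le j\le g:a_j\not\equiv 0\pmod p\}=r=i_{\mX}$, as claimed.

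The main obstacle is the cited step identifying the $p$-rank with the number of unit Frobenius roots: this is the nontrivial structural fact underlying the proposition, and it rests on the structure theory of $p$-divisible groups over $\F_q$ (matching the étale part of $\mJ_{\mX}[p^\infty]$, whose height is $i_{\mX}$, with the slope-$0$ isoclinic component of the Newton polygon of Frobenius). In Stichtenoth's original treatment \cite{S79} this is done concretely by analyzing the Hasse--Witt matrix (the action of the Cartier operator on $H^1(\mX,\mathcal{O}_{\mX})$) and matching its stable rank over $\F_q$ with the number of unit roots of $\Phi_{\mX}$; once that connection is established, the Newton polygon argument above is routine.
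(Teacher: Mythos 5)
Your proposal is correct, but be aware that the paper itself contains no proof of this proposition at all: it is imported wholesale from Stichtenoth \cite{S79}, whose argument works with the Hasse--Witt matrix (the semilinear Frobenius/Cartier action on $H^1(\mX,\mathcal{O}_{\mX})$) and its relation to the zeta function, not with Newton polygons. So your route is genuinely different from the paper's source. Your bookkeeping is sound: with $\Phi_{\mX}(t)=\prod_k(t-\alpha_k)$, $\lambda_k=\nu_p(\alpha_k)$ sorted increasingly, the ultrametric bound $\nu_p(a_j)\ge\sum_{k=1}^{j}\lambda_k$ kills $a_j \bmod p$ for $r<j\le g$, the uniqueness of the minimal $r$-subset gives $\nu_p(a_r)=0$, and the slope symmetry $\lambda_k+\lambda_{2g+1-k}=n$ gives $r\le g$; the edge cases $r=0$ and $r=g$ also go through. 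What your approach buys is transparency: the entire content of the proposition is isolated in the single structural fact that the $p$-rank of $\mJ_{\mX}$ equals the multiplicity of slope $0$ in the $p$-adic Newton polygon of $\Phi_{\mX}$ (the height of the \'etale part of $\mJ_{\mX}[p^\infty]$). The honest caveat, which you flag yourself, is that this cited fact is of essentially the same depth as the proposition being proved --- the two are equivalent via exactly your symmetric-function argument --- so your write-up is a proof only relative to standard Dieudonn\'e/$p$-divisible-group theory (Waterhouse \cite{W69} is an adequate reference), just as the paper's version is a proof only relative to \cite{S79}. Within that proviso, your reduction is valid and arguably more illuminating than the bare citation, since it explains why only the coefficients $a_j$ with $j\le g$ and only their residues modulo $p$ can matter.
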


\begin{theorem}[Deuring-Shafarevich (see e.g.~\cite{HKT08})]\label{a.3} Let $E,F$ be the function fields of two curves $\mX/\oF_q$ and $\mY/\oF_q$, respectively.
Assume that $E/F$ is a Galois extension of function fields and the Galois group of this
extension is a $p$-group. Then
$$i_{\mX}-1=[E:F](i_{\mY}-1)+\sum_{P\in\mY}\sum_{\substack{Q\in\mX \\\  Q|P}}(e(Q|P)-1).$$
\end{theorem}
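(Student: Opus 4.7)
My plan is to establish the Deuring–Shafarevich formula by first reducing a general $p$-group cover to a tower of cyclic degree-$p$ covers, and then treating the degree-$p$ (Artin–Schreier) case through étale cohomology with $\mathbb{F}_p$-coefficients. The reduction is standard: any nontrivial finite $p$-group has a nontrivial center and hence a normal subgroup of index $p$, so $G=\mathrm{Gal}(E/F)$ admits a chain of normal subgroups $G=G_0\rhd G_1\rhd\cdots\rhd G_s=\{1\}$ with each successive quotient cyclic of order $p$. Passing to fixed fields yields a tower $\mX=\mX_s\to\mX_{s-1}\to\cdots\to\mX_0=\mY$, every step of which is a $\mathbb{Z}/p$-cover. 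Assuming the formula at each step and invoking multiplicativity of ramification in towers, a telescoping induction recovers the general identity; the bookkeeping for the ramification sum uses the standard tower identity $e(Q|P)-1=\sum_{R}[e(Q|R)(e(R|P)-1)]+(e(Q|R)-1)$ along intermediate curves.

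For the degree-$p$ step, I would identify $i_{\mX}$ with $\dim_{\mathbb{F}_p}H^1_{\mathrm{\acute et}}(\mX,\underline{\mathbb{F}_p})$, which is valid because étale $\mathbb{Z}/p$-torsors on $\mX$ correspond (via Jacobian theory) to $\mathbb{F}_p$-subgroups of $J_{\mX}[p](\overline{\mathbb{F}}_q)$. Since $\pi$ is finite, $H^\bullet_{\mathrm{\acute et}}(\mY,\pi_*\underline{\mathbb{F}_p})=H^\bullet_{\mathrm{\acute et}}(\mX,\underline{\mathbb{F}_p})$, so
\[
\chi_{\mathrm{\acute et}}(\mX,\underline{\mathbb{F}_p})=2-i_{\mX},\qquad \chi_{\mathrm{\acute et}}(\mY,\underline{\mathbb{F}_p})=2-i_{\mY}.
\]
Applying Grothendieck–Ogg–Shafarevich to the lisse rank-$p$ sheaf $\pi_*\underline{\mathbb{F}_p}$ on $\mY$,
\[
\chi_{\mathrm{\acute et}}(\mY,\pi_*\underline{\mathbb{F}_p})=p(2-2g_{\mY})-\sum_{P\in\mY}(\mathrm{Sw}_P+\delta_P),
\]
and subtracting $p$ times the analogous formula for the constant sheaf on $\mY$, the $(2-2g_{\mY})$ terms cancel and one obtains
\[
i_{\mX}-p\cdot i_{\mY}=-(p-1)+\sum_{P}(\mathrm{Sw}_P+\delta_P).
\]
Rewriting $-(p-1)$ as $p(i_{\mY}-1)-(i_{\mX}-1)+(i_{\mX}-p\,i_{\mY}-1+p)$ is cosmetic; what really matters is identifying the local terms.

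The main obstacle, and the technical heart of the argument, is the explicit computation of $\mathrm{Sw}_P+\delta_P$ at each branch point. At an unramified point $P$ the stalk $(\pi_*\underline{\mathbb{F}_p})_{\bar P}$ is a free $\mathbb{F}_p[G]$-module with trivial inertia, so the contribution is zero. At a (necessarily totally) ramified point, decompose $\pi_*\underline{\mathbb{F}_p}\otimes\overline{\mathbb{F}}_p$ into the $p$ characters of the inertia group $\mathbb{Z}/p$; the trivial character contributes $0$, while each of the $p-1$ nontrivial characters has Swan conductor determined by the break in the upper ramification filtration. Using the local Artin–Schreier presentation $y^p-y=f_P$ with $\mathrm{ord}_P(f_P)=-d_P$ coprime to $p$, together with the Hasse–Arf theorem, one shows the total local contribution equals $(p-1)(d_P+1)-(p-1)=(p-1)d_P$; on the other hand $\sum_{Q\mid P}(e(Q\mid P)-1)=p-1$ at such a point, and comparing coefficients via the Riemann–Hurwitz formula for $g_{\mX}$ (which involves exactly the same $d_P$'s) eliminates $d_P$ in favor of $(e(Q|P)-1)$. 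This produces the Deuring–Shafarevich identity in the cyclic case, and the tower argument above finishes the proof.
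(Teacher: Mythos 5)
The paper never proves this statement---it is quoted from the literature (the citation is to Hirschfeld--Korchm\'aros--Torres) and used as a black box---so your argument has to stand entirely on its own. Its first half does stand: reducing a $p$-group cover to a tower of $\mathbb{Z}/p$-covers through a composition series is standard, and the telescoping induction is sound because $e(Q|P)-1=e(Q|R)\bigl(e(R|P)-1\bigr)+\bigl(e(Q|R)-1\bigr)$ combined with $\sum_{Q|R}e(Q|R)=[E:M]$ (all residue degrees are $1$ over the algebraically closed constant field $\oF_q$) makes the ramification sums compose correctly across an intermediate curve.

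The cyclic step---which you rightly call the heart of the matter---fails structurally, not fixably. Grothendieck--Ogg--Shafarevich is a theorem about torsion sheaves of order \emph{prime to} the characteristic; for $\mathbb{F}_p$-sheaves on a curve in characteristic $p$ neither the formula nor its local ingredients exist. Concretely: (i) $\overline{\mathbb{F}}_p^*$ has no elements of order $p$, so the inertia group $\mathbb{Z}/p$ admits only the trivial character in characteristic $p$, and the proposed decomposition of $\pi_*\mathbb{F}_p\otimes\overline{\mathbb{F}}_p$ into $p$ characters is impossible (the regular representation $\mathbb{F}_p[\mathbb{Z}/p]$ is a single indecomposable unipotent block); (ii) Swan conductors are only defined for coefficients of residue characteristic different from $p$; (iii) even the Euler characteristics are wrong, since $H^2_{\mathrm{et}}(\mX,\mathbb{F}_p)=0$ for a projective curve in characteristic $p$ (there is no Poincar\'e duality for $p$-torsion coefficients), so $\chi(\mX,\mathbb{F}_p)=1-i_{\mX}$, not $2-i_{\mX}$. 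The inconsistency is visible on the paper's own central example $y^p-y=f(x)$ over $\mY=\mathbb{P}^1$: there $i_{\mX}=i_{\mY}=0$ (Lemma \ref{a.4}), so your displayed identity would force the local terms at the unique branch point to sum to $p-1$, i.e.\ zero Swan contribution, contradicting your claimed local value $(p-1)d_P>0$; and no ``comparison with Riemann--Hurwitz'' can remove the $d_P$, because Riemann--Hurwitz computes $g_{\mX}$, not $i_{\mX}$, and you have no second independent identity for $i_{\mX}$ to compare against. Indeed, the whole content of Deuring--Shafarevich is that the $p$-rank obeys a \emph{tame} Hurwitz-type formula in which the Artin--Schreier jumps do not appear. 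The cohomological route can be repaired by replacing GOS with the Euler--Poincar\'e formula that is actually true for $p$-torsion coefficients, namely Crew's formula $\chi_c(U,\mathcal{F})=\mathrm{rank}(\mathcal{F})\cdot\chi_c(U,\mathbb{F}_p)$ for a lisse $\mathbb{F}_p$-sheaf on a smooth affine curve $U$ over $\oF_q$, which has no local terms at all: delete the branch locus from $\mY$, apply this to $\pi_*\mathbb{F}_p$, and add back the stalk dimensions, which at a branch point $P$ equal the number of places above $P$, i.e.\ $p-\sum_{Q|P}(e(Q|P)-1)$; with $\chi=1-i$ on both curves this yields exactly the asserted identity. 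Alternatively, the cyclic case follows from the classical, purely algebraic computation of the $p$-rank of Artin--Schreier covers (Subrao's Hasse--Witt matrix argument), with no \'etale cohomology at all.
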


\begin{lem}\label{a.4} The Hasse-Witt invariant  $i_{\mX_f}$ for the curve $\mX_f$ defined in (\ref{eq2.1}) is $0$.
\end{lem}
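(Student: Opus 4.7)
The plan is to apply the Deuring--Shafarevich Theorem (Theorem \ref{a.3}) to the Artin--Schreier covering $\mX_f \to \Proj^1$ given by projection onto the $x$-coordinate. Concretely, set $\mY = \Proj^1$, let $F = \oF_q(x)$ be its function field, and let $E = \oF_q(x,y)$ with $y^p - y = f(x)$ be the function field of $\mX_f$. The extension $E/F$ is Galois of degree $p$, with Galois group generated by the Artin--Schreier automorphism $y \mapsto y+1$. Since this Galois group is a $p$-group, the hypotheses of Theorem \ref{a.3} are satisfied, and the identity
\[
i_{\mX_f} - 1 \;=\; [E:F]\,(i_{\Proj^1} - 1) \;+\; \sum_{P \in \Proj^1}\sum_{\substack{Q \in \mX_f \\ Q \mid P}} \bigl(e(Q \mid P) - 1\bigr)
\]
applies.

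Next I would plug in the two easy inputs. First, $i_{\Proj^1} = 0$ because the Jacobian of $\Proj^1$ is trivial (equivalently, the genus is $0$, and $0 \le i_{\Proj^1} \le 0$). Second, $[E:F] = p$, since $y^p - y - f(x)$ is irreducible over $F$ (an Artin--Schreier polynomial is reducible over $F$ only if $f(x) = z^p - z$ for some $z \in F$, which is impossible here because $\deg f = m$ with $\gcd(m,p) = 1$, so $f$ cannot be a $p$-th power modulo additive terms).

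The substantive step is the ramification analysis. Under the assumption $\gcd(m,p) = 1$ (made throughout the paper), the standard theory of Artin--Schreier extensions says that every finite place of $F$ is unramified in $E$: at a finite place where $f$ is regular, the equation $y^p - y = f(x)$ defines an \'etale cover locally. The only ramified place is the place at infinity $P_\infty$ of $\Proj^1$, where $f(x)$ has a pole of order $m$ coprime to $p$, and this is the generic ramification scenario of Artin--Schreier theory: $P_\infty$ is totally ramified with a unique place $Q_\infty$ of $\mX_f$ lying above it, so $e(Q_\infty \mid P_\infty) = p$. Thus the double sum collapses to $(p - 1)$.

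Assembling the pieces,
\[
i_{\mX_f} - 1 \;=\; p \cdot (0 - 1) + (p - 1) \;=\; -p + p - 1 \;=\; -1,
\]
so $i_{\mX_f} = 0$, as claimed. The main obstacle is the ramification claim; I would dispatch it either by invoking the standard classification of ramification in Artin--Schreier covers (which uses $\gcd(m,p) = 1$ to rule out pulling $f$ into the image of $\wp(z) = z^p - z$ locally at $\infty$) or, if a self-contained argument is preferred, by writing down a local uniformizer at $Q_\infty$ directly in terms of $x^{-1}$ and $y$ and computing the valuation explicitly.
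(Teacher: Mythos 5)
Your proposal is correct and follows essentially the same route as the paper: both apply the Deuring--Shafarevich Theorem to the degree-$p$ Artin--Schreier cover $\mX_f\to\Proj^1$, observe that only the place at infinity ramifies (totally, with index $p$, using $\gcd(m,p)=1$), and compute $i_{\mX_f}-1=p(0-1)+(p-1)=-1$. Your write-up merely supplies details the paper leaves implicit (irreducibility of $y^p-y-f(x)$, vanishing of $i_{\Proj^1}$, and the \'etale/ramification dichotomy at finite versus infinite places).
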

\begin{proof} Let $F$ be the rational function field $\oF_q(x)$ of the projective line and let $E$ be the function field $\oF_q(\mX_f)$ of $\mX_f$. The only ramified point is the unique point $\infty$  lying over the pole of $x$. Moreover, the ramification index is $p$. Thus, by the Deuring-Shafarevich Theorem, we have
\[i_{\mX_f}-1=[E:F](0-1)+p-1=p\times (0-1)+p-1=-1,\]
i.e., $i_{\mX_f}=0$.
\end{proof}
\section{Abelian varieties}
We introduce a few definitions now.

\begin{definition}\label{a.5}{\rm
\begin{itemize}
  \item[(i)] An abelian variety is called elementary or simple if it has no subvarieties.
\item[(ii)] Two abelian varieties over $\F_q$ are said { isogenous} if they have the same characteristic polynomial.
\end{itemize}
}
\end{definition}
The  characteristic polynomial of an abelian variety over $\F_q$ is a $q$-Weil polynomial. Thus, by Lemma \ref{3.5} the  characteristic polynomial of an abelian variety over $\F_q$ of dimension $g$ has the form $\sum_{i=0}^{2g}a_{2g-i}T^i\in\ZZ[T]$ with $a_0=1$ and $a_{2g-i}=q^{g-i}a_i$ for $i=0,1,\dots,g$.
The following well-known result (see \cite{W69}) provides information on factorization of characteristic polynomials of an abelian varieties.
\begin{theorem}\label{a.6} Every abelian variety is isogenous to a product of elementary abelian varieties, i.e., the characteristic polynomial of every abelian variety is a product of characteristic polynomials of elementary abelian varieties.
\end{theorem}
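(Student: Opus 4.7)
The plan is to deduce Theorem \ref{a.6} from Poincar\'e's complete reducibility theorem, together with an induction on dimension. The Poincar\'e statement I need is the following: for any abelian variety $A$ over $\F_q$ and any abelian subvariety $B\subseteq A$, there exists an abelian subvariety $C\subseteq A$ such that the addition map $B\times C\to A$ is an isogeny. To construct $C$, I would fix a polarization $\lambda\colon A\to\hat{A}$ coming from any ample line bundle. The inclusion $\iota\colon B\hookrightarrow A$ dualizes to a surjection $\hat{\iota}\colon\hat{A}\twoheadrightarrow\hat{B}$, and I would take $C$ to be the identity component of the kernel of the composite $\hat{\iota}\circ\lambda\colon A\to\hat{B}$. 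The dimension count $\dim C=\dim A-\dim B$ and the finiteness of $B\cap C$ would then force the addition map to be an isogeny.

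The key point inside that construction is checking that $B\cap C$ is finite. I would argue that the pullback of $\lambda$ along $\iota$ is $\hat{\iota}\circ\lambda\circ\iota$, which is a polarization on $B$ (restriction of an ample class to a subvariety remains ample), hence has finite kernel; but $\iota(B)\cap C$ is precisely the image under $\iota$ of this kernel, and is therefore finite.

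With Poincar\'e reducibility in hand, I would induct on $g=\dim A$. The base case $g=0$ is trivial, and if $A$ itself is simple there is nothing to prove. Otherwise I would choose a simple abelian subvariety $B\subsetneq A$ (taking a minimal nonzero abelian subvariety, which exists by descending on dimension), apply Poincar\'e to obtain $C$ with $\dim C<\dim A$, and then apply the induction hypothesis to $C$ to write $C\sim S_1\times\cdots\times S_k$ with each $S_j$ simple. This gives $A\sim B\times S_1\times\cdots\times S_k$, which is the desired decomposition.

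The second assertion, about characteristic polynomials, then comes for free: the Frobenius endomorphism of a product acts diagonally on the Tate module, so $\Phi_{A_1\times\cdots\times A_k}=\prod_j \Phi_{A_j}$; and by Definition \ref{a.5}(ii), isogenous abelian varieties share the same characteristic polynomial. Combining these two facts with the isogeny $A\sim B\times S_1\times\cdots\times S_k$ yields the factorization of $\Phi_A$ claimed in the theorem. The main obstacle I anticipate is purely expository, namely justifying the geometric input to Poincar\'e reducibility (that the restricted polarization is still a polarization with finite kernel), for which I would cite Mumford \cite{Mum08} rather than reproving it; the rest of the argument is then a clean induction.
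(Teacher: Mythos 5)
Your proof is correct, but note that the paper itself offers no proof of Theorem \ref{a.6}: it is quoted as a well-known result with a pointer to Waterhouse \cite{W69} (it is the Poincar\'e--Weil complete reducibility theorem). You have therefore supplied genuine content where the paper supplies a citation, and your argument is the standard textbook one: construct the complement $C$ as the identity component of $\ker(\hat{\iota}\circ\lambda)$, check $\dim C=\dim A-\dim B$, observe that $B\cap C$ lies in the kernel of the restricted polarization $\hat{\iota}\circ\lambda\circ\iota=\iota^{*}\lambda$ and is hence finite, conclude that addition $B\times C\to A$ is an isogeny, and induct on dimension. The one step you should restate concerns the passage to characteristic polynomials: the paper's Definition \ref{a.5}(ii) \emph{defines} ``isogenous'' as ``having the same characteristic polynomial,'' which is not the geometric notion your construction produces (a surjective homomorphism with finite kernel), so citing that definition at this point is circular relative to the paper's conventions. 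What you actually need is the standard fact, due to Weil and used throughout \cite{W69}, that a geometric isogeny induces an isomorphism of rational Tate modules commuting with Frobenius, hence equality of characteristic polynomials; combined with the diagonal action of Frobenius on the Tate module of a product, which gives $\Phi_{A_{1}\times\cdots\times A_{k}}=\prod_{j}\Phi_{A_{j}}$, your decomposition then yields exactly the factorization of $\Phi_{A}$ asserted in the theorem.
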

In view of the above theorem, it is sufficient to consider elementary abelian varieties in our case (see \cite{W69}).

\begin{theorem}[Tate-Honda]\label{a.7} There is one-to-one correspondence between isogeny classes of elementary abelian varieties over $\F_q$ and conjugacy classes of $q$-Weil numbers. More precisely, a polynomial $\Phi(T)$ is the characteristic polynomial of an elementary abelian variety over $\F_q$ if and only if $\Phi(T)=r(T)^e$ for some irreducible polynomial $r(T)\in\ZZ[T]$ with all roots being $q$-Weil numbers and $e$ is the least common denominators of $\nu_p(h(0))/n$, where $h(T)$ runs through all irreducible factors over $\QQ_p$.
\end{theorem}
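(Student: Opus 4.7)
The plan is to establish the correspondence in two directions. \textbf{Direction (i):} given an elementary (simple) abelian variety $A/\F_q$, I would start from the structure of the endomorphism algebra $D := \mathrm{End}(A)\otimes_{\ZZ}\QQ$. Simplicity of $A$ forces every nonzero endomorphism to be an isogeny, so $D$ is a finite-dimensional division algebra over $\QQ$. The Frobenius $\pi_A$ lies in $D$, and by Tate's theorem on homomorphisms of abelian varieties over finite fields its minimal polynomial $r(T)\in\ZZ[T]$ generates the center $K := \QQ(\pi_A)$ of $D$, with all roots of $r(T)$ being $q$-Weil numbers (this last fact comes from the Weil conjectures applied to $A$). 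Writing $[D:K] = d^2$, the characteristic polynomial of Frobenius acting on any $\ell$-adic Tate module $V_\ell(A)$ (with $\ell\ne p$) equals $r(T)^d$, and Tate's dimension identity gives $2\dim A = d\cdot[K:\QQ]$. Thus one takes $e = d$ in the notation of the statement.

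To pin down $d$ I would compute the Hasse invariants of $D$ in the Brauer group $\mathrm{Br}(K)$. By the Albert--Brauer--Hasse--Noether local-global sequence, $d$ equals the least common multiple of the orders of the local invariants $\mathrm{inv}_v(D)\in\QQ/\ZZ$ over all places $v$ of $K$. Away from $p$ and $\infty$ the algebra $D$ is unramified, so these contributions vanish. At each place $v\mid p$, corresponding to an irreducible factor $h(T)$ of $r(T)$ in $\QQ_p[T]$, Tate's computation gives $\mathrm{inv}_v(D) \equiv v(\pi_A)/v(q) \pmod{1}$; since $v(\pi_A) = \nu_p(h(0))/[K_v:\QQ_p]$ and $v(q) = n/[K_v:\QQ_p]$, this invariant has denominator equal to that of the reduced fraction $\nu_p(h(0))/n$. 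The real places contribute only in the exceptional cases $r(T) = T\pm\sqrt{q}$, handled separately. Therefore $e = d$ is precisely the least common denominator of the fractions $\nu_p(h(0))/n$ as $h$ ranges over the $\QQ_p$-irreducible factors of $r(T)$, matching the statement.

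\textbf{Direction (ii)} is Honda's part of the theorem: starting from a conjugacy class of $q$-Weil numbers (equivalently, an irreducible $r(T)$ with $q$-Weil roots), one must produce an elementary abelian variety whose Frobenius has $r(T)$ as minimal polynomial. The construction proceeds by first realising the chosen Weil number as the Frobenius of the reduction of a CM abelian variety: one builds a CM abelian variety over a number field via the Shimura--Taniyama recipe, selects a prime of good reduction above $p$ so that the Main Theorem of Complex Multiplication identifies the reduction's Frobenius with a prescribed element, and thereby obtains an abelian variety over some $\F_{q^N}$. A Weil-restriction and twisting argument is then used to descend from $\F_{q^N}$ to $\F_q$ in such a way that the geometric Frobenius lands on the desired Weil number. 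Combining (i) and (ii) gives the advertised bijection between isogeny classes and conjugacy classes.

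The main obstacle is direction (ii): it rests on the full strength of the Main Theorem of Complex Multiplication for abelian varieties with CM by an arbitrary CM field, including the delicate descent from $\F_{q^N}$ to $\F_q$. Direction (i) is more routine once one accepts Tate's isogeny theorem, as the remaining work is a local-invariant calculation in class field theory.
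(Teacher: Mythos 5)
The paper contains no proof of Theorem \ref{a.7} to compare against: this is the classical Honda--Tate theorem, which the paper invokes as a black box (the implicit source being Tate, Honda, and Waterhouse \cite{W69}). Your outline is precisely the standard literature proof of that theorem: Tate's isogeny theorem identifies the center of $D=\mathrm{End}(A)\otimes\QQ$ with $K=\QQ(\pi_A)$ and gives the characteristic polynomial as $r(T)^d$ with $2\dim A=d\,[K:\QQ]$; the Albert--Brauer--Hasse--Noether theorem reduces the computation of $d$ (the order of $[D]$ in $\mathrm{Br}(K)$) to local invariants; Tate's local computation handles the places over $p$; and Honda's CM-lifting argument plus Weil-restriction descent gives effectivity of every Weil number. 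As a roadmap this is correct, and it is essentially the only known route.

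Two points, however, need repair. \textbf{(a)} Your invariant computation contains two mutually cancelling slips: Tate's formula is $\mathrm{inv}_v(D)\equiv\frac{v(\pi_A)}{v(q)}\,[K_v:\QQ_p]\pmod 1$, with the degree factor; you drop that factor, but you also write $v(q)=n/[K_v:\QQ_p]$, which is incompatible with the normalization $v(p)=1$ under which your other identity $v(\pi_A)=\nu_p(h(0))/[K_v:\QQ_p]$ holds (there $v(q)=n$). The final value $\mathrm{inv}_v(D)\equiv\nu_p(h(0))/n\pmod 1$ is correct, but the intermediate identities as written are inconsistent with one another. \textbf{(b)} More substantively, the real-place contribution is not confined to $r(T)=T\pm\sqrt q$: a real place of $K$ exists exactly when $\pi_A=\pm\sqrt q$, and when $n$ is odd --- the standing assumption of this paper --- the minimal polynomial of $\pm\sqrt q$ is $r(T)=T^2-q$, irreducible over $\QQ$. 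In that case the unique place of $K=\QQ(\sqrt p)$ above $p$ has invariant $0$, so the least common denominator of the fractions $\nu_p(h(0))/n$ equals $1$, yet the two real invariants equal $1/2$ and force $e=2$: the corresponding simple variety is a surface with characteristic polynomial $(T^2-q)^2$. So the formula for $e$ in the statement you were asked to prove, which ignores real invariants, is actually false in exactly this exceptional case; a careful execution of your argument proves the corrected statement, namely that $e$ is the order of $[D]$ in $\mathrm{Br}(K)$, i.e.\ the least common denominator of \emph{all} local invariants, real ones included. This defect sits in the paper's quoted statement as well; it happens to be harmless for the paper's application, since in Theorem \ref{a.8} the real-root case $r(T)=T^2-q$ is dispatched separately with $b_1=0$, but in your write-up the phrase ``handled separately'' should be promoted from a parenthetical remark to an explicit case analysis.
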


\section{Abelian varieties with Hasse-Witt invariants $0$}

The following result plays a crucial role.

\begin{theorem}\label{a.8} Let $\Phi_A(T)=\sum_{i=0}^{2g}b_{2g-i}T^i\in\ZZ[T]$ be the  characteristic polynomial of an elementary abelian variety $A$ with Hasse-Witt invariant equal to $0$. If the dimension $g$ of $A$ is bigger than  $1$, then  $\nu_p(b_1)\ge n/g$.
\end{theorem}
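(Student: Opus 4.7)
The plan is to combine the Tate--Honda classification of elementary abelian varieties (Theorem \ref{a.7}) with a Newton polygon analysis, along the lines of the sketch given in Theorem \ref{3.8}. First, I would write $\Phi_A(T)=r(T)^e$ where $r$ is irreducible over $\QQ$ of some even degree $2s$ with $q$-Weil roots, $es=g$, and $e$ is the least common denominator of the fractions $\nu_p(h(0))/n$ as $h$ ranges over the irreducible factors of $r$ in $\QQ_p[T]$. The Newton polygon of $\Phi_A$, formed from the points $(i,\nu_p(\text{coef of }T^i))$, has the same slopes as that of $r$ but with lengths multiplied by $e$; it runs from $(0,gn)$ to $(2g,0)$. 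Because $\Phi_A$ is a $q$-Weil polynomial, the pairing $\omega\leftrightarrow q/\omega$ on roots forces the slope multiset to be symmetric under $-\gamma\leftrightarrow -(n-\gamma)$.

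The first substantive step is to deduce from the Hasse--Witt hypothesis that the Newton polygon has no slope equal to $0$. By the coefficient characterization of the Hasse--Witt invariant (analogue of Proposition \ref{a.2} for $A$), $i_A=0$ means $\nu_p(b_j)\ge 1$ for every $j=1,\ldots,g$. Suppose the polygon had a horizontal segment of length $L\ge 1$ terminating at $(2g,0)$; then $(2g-L,0)$ would be a vertex of the polygon, forcing $\nu_p(b_L)=0$. For $L\le g$ this contradicts $i_A=0$ directly; for $L>g$ the symmetric partner slope $-n$ would also have length $>g$, and the two together would exceed the total polygon length $2g$. Hence the rightmost slope $-\gamma_{\min}$ is strictly negative, i.e.\ $\gamma_{\min}>0$.

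The conclusion then follows from two estimates. Because the graph point $(2g-1,\nu_p(b_1))$ lies on or above the Newton polygon, one has $\nu_p(b_1)\ge \text{polygon}(2g-1)=\gamma_{\min}$. It remains to show $\gamma_{\min}\ge n/g$. If $\gamma_{\min}=n/2$ this is immediate from $g\ge 2$. Otherwise $\gamma_{\min}<n/2$, and by the slope symmetry above, the rightmost segment of $r$'s Newton polygon has length $\ell\le s$. Choose any $\QQ_p$-irreducible factor $h$ of $r$ whose slope is $-\gamma_{\min}$ and set $d=\deg h$, so that $d\le \ell\le s$ and $\nu_p(h(0))=\gamma_{\min}d$. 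The defining property of $e$ as a least common denominator gives $e\cdot\nu_p(h(0))/n\in\ZZ$, and since $\nu_p(h(0))$ is a \emph{positive} integer, its smallest admissible value is $n/e$; hence $\gamma_{\min}d\ge n/e$ and therefore $\gamma_{\min}\ge n/(ed)\ge n/(es)=n/g$.

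The delicate part of the argument will be the first step: turning the purely pointwise data ``$\nu_p(b_j)\ge 1$ for $j\le g$'' (which is all that Hasse--Witt zero gives directly) into the global slope restriction $\gamma_{\min}>0$. It is precisely the $q$-Weil symmetry $-\gamma\leftrightarrow -(n-\gamma)$ that converts a local upper bound on a coefficient into a global constraint on slopes. After that obstacle is cleared, the Tate--Honda divisibility combined with the elementary length bound $d\le s$ finishes the proof via a one-line Newton polygon reading at $x=2g-1$.
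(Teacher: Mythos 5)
Your proof is correct, and it shares the paper's skeleton: the Tate--Honda theorem (Theorem \ref{a.7}) to write $\Phi_A(T)=r(T)^e$ with $e$ the least common denominator of the numbers $\nu_p(h(0))/n$, followed by a Newton-polygon argument that reads off $\nu_p(b_1)$ at abscissa $2g-1$ and uses the divisibility property of $e$ to force $\nu_p(h(0))\ge n/e$ for a $\QQ_p$-irreducible factor $h$ on the terminal segment. Where you genuinely diverge is in how the polygon is constrained. The paper first transfers the Hasse--Witt condition from $\Phi_A$ to the irreducible factor $r$ (this is where Lemma \ref{3.6} is needed) and then locates the terminal segment via the index sets $I$ and $J$ and the coefficient identity $a_{2r-j}=q^{r-j}a_j$; you instead stay with the polygon of $\Phi_A$ itself and invoke the Weil symmetry of slopes $\gamma\leftrightarrow n-\gamma$ twice --- once to exclude a horizontal terminal segment (which would force $\nu_p(b_L)=0$ for some $L\le g$, contradicting the hypothesis), and once to bound the terminal segment of $r$'s polygon by $s$ when $\gamma_{\min}<n/2$. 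This organization buys two small simplifications: Lemma \ref{3.6} is never invoked, and the real-root case $r(T)=T^2-q$, which the paper treats separately at the start, is absorbed into your case $\gamma_{\min}=n/2$. Conversely, the paper's $I/J$ computation is exactly the coefficient-level incarnation of the symmetry you use, so it remains self-contained in the ring of integer coefficients without appealing to the root pairing $\omega\leftrightarrow q/\omega$. Both proofs conclude with the same division: $\gamma_{\min}\ge n/(e\deg h)\ge n/(es)=n/g$ in your notation, $\nu_p(a_1)\ge \nu_p(a_i)/i\ge n/(ie)\ge n/(re)=n/g$ in the paper's, whence $\nu_p(b_1)\ge n/g$.
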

\begin{proof} By Theorem \ref{a.7}, we know that every elementary abelian variety $A$ has the characteristic polynomial of the form  $\Phi_A(T)=r(T)^e$ for an integer $e\ge 1$ and a monic irreducible polynomial $r(T)$ over $\ZZ$.

If $r(T)$ has a real root, then  this root must be $\pm\sqrt{q}$. In this case, we have that $r(T)=(T-\sqrt{q})(T+\sqrt{q})=T^2-q$. Hence, $b_1=0$ and the desired result follows.

 Now we assume that all roots of $r(T)$ are not real. Then it is clear that the degree of $r(T)$ is even. Let $r(T)=\sum_{i=0}^{2r}a_{2r-i}T^i$ with $g=er$. To analyze $\nu_p(a_i)$, we look at the Newton polygon  of $r(T)$.
 By our assumption, we know that $\nu_p(a_i)>0$ for all $1\le i\le r$.

Define the set
\[I:=\{1\le i\le r:\;  \nu_p(a_i)\ge in/2\}\] and
\[J:=\{1\le j\le r:\;  \nu_p(a_j)< jn/2\}.\]
If $1\in I$, then  $\nu_p(a_1)\ge n/2\ge n/g$. Hence,  $\nu_p(b_1)=\nu_P(e)+\nu_p(a_1)\ge n/g$.

Now assume that $1\in J$. We claim the following.
\begin{quote}
There exists $1\le i\le r$ such that
\begin{equation}\label{eq3.3}\frac{\nu_p(a_{i})}{i}<\min_{r+1\le \ell\le 2r}\left\{\frac{\nu_p(a_{\ell})}{\ell}\right\}.\end{equation}
\end{quote}

First of all, $\nu_p(a_1)<n/2=\nu_p(q^r)/2r=\nu_p(a_{2r})/2r$. Hence, to prove the inequality (\ref{eq3.3}), it is sufficient to prove that for every $\ell\in\{r+1,r+2,\dots,2r-1\}$, there exists $j\in J$ such that $\nu_p(a_{\ell})/\ell>{\nu_p(a_{j})}/{j}$. Note that, since $I\cup J=\{1,2,\dots,r\}$, thus $\ell=2r-k$ with $k$ in $I\setminus\{r\}$ or in $J$.

 Then for every $i\in I\setminus\{r\}$ and $j\in J$, we have
\[\frac{\nu_p(a_{2r-j})}{2r-j}= \frac{\nu_p(a_{j})+(r-j)n}{2r-j}>\frac{\nu_p(a_{j})}{j}\]
and
\[\frac{\nu_p(a_{2r-i})}{2r-i}= \frac{\nu_p(a_{i})+(r-i)n}{2r-i}\ge \frac n2>\frac{\nu_p(a_{j})}{j}.\]
Our claim follows.

Now let $0\le i\le r$ be the largest index such that
\[\frac{\nu_p(a_{i})}{i}=\min_{1\le j\le 2r}\left\{\frac{\nu_p(a_{j})}{j}\right\},\]
 then $(2r-i,\nu_p(a_i))\longleftrightarrow(2r,0)$ forms a segment in the Newton Polygon of $r(T)$ with slope $-\nu_p(a_i)/i$. Assume that $h(t)$ is an irreducible factor of $r(T)$ in $\QQ_p[t]$ corresponding to this segment. Then we have $\nu_p(h(0))\le \frac{\nu_p(a_i)}{i}\times (2r-(2r-i))=\nu_p(a_i)$.  Assume that $\frac{\nu_p(h(0))}n=\frac{k}{\ell}$ with $\gcd(k,\ell)=1$. Then we have $e\ge \ell$ and
\begin{equation}\nu_p(a_i)\ge\nu_p(h(0))=n\times \frac{\nu_p(h(0))}n=\frac{kn}{\ell}\ge \frac{kn}{e}\ge \frac{n}{e}.\end{equation}
Since $(2r-i,\nu_p(a_i))\longleftrightarrow(2r,0)$ forms a segment in the Newton Polygon of $h(t)$, the slope $-\nu_p(a_1)$ of the segment $(2r-1,\nu_p(a_1))\longleftrightarrow(2r,0)$ is at most the slope $-\nu_p(a_i)/i$ of the segment $(2r-i,\nu_p(a_i))\longleftrightarrow(2r,0)$, i.e.,
\[\nu_p(a_1)\ge \frac{\nu_p(a_i)}i\ge \frac{n}{ie}\ge \frac{n}{re}=\frac{n}{g}.\]
Hence, $\nu_p(b_1)\ge \nu_p(e)+\nu_p(a_1)\ge \frac{n}{g}$.
This completes the proof.
\end{proof}

\begin{cor}\label{a.8} Assume that $\tL(T)$ is the reciprocal polynomial of the $L$-polynomial of an algebraic curve $\mX$ over $\F_q$. Let $\tL(T)$ have the canonical factorization into product $\prod r(T)^e$. If the Hasse-Witt invariant of $\mX$ is $0$, then $\nu_P(a_1)\ge n/g$, where $ r(T)^e=\sum_{i=0}^{2g}a_{2g-i}T^i$ with $g>1$.

\end{cor}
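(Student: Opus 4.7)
The plan is to reduce Corollary \ref{a.8} to Theorem \ref{a.8} (for elementary abelian varieties) by invoking the isogeny decomposition and the Hasse-Witt preservation lemma.

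First I would identify $\tL(T)$ as the characteristic polynomial of the Jacobian $\mJ_\mX$, which is an abelian variety of dimension equal to the genus of $\mX$. By Theorem \ref{a.6}, $\mJ_\mX$ is isogenous to a product $\prod_j A_j$ of elementary abelian varieties, so $\tL(T) = \prod_j \Phi_{A_j}(T)$. On the other hand, by Theorem \ref{a.7} (Tate-Honda), each $\Phi_{A_j}(T)$ is of the form $r_j(T)^{e_j}$ for an irreducible polynomial $r_j(T)\in\ZZ[T]$ whose roots are $q$-Weil numbers, with $e_j$ determined by the Newton data of $r_j(T)$. Matching against the canonical factorization $\tL(T) = \prod r(T)^e$ over $\QQ$, each factor $r(T)^e$ appearing in the product is exactly the characteristic polynomial of some elementary abelian subvariety (or isogeny factor) $A$ of $\mJ_\mX$ with $\dim A = g$.

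Next I would transfer the Hasse-Witt condition from $\mX$ to each factor. Proposition \ref{a.2} says $i_\mX$ equals the maximal $j\le g$ with $a_j \not\equiv 0 \pmod p$ in $\tL(T)$; the hypothesis $i_\mX = 0$ therefore translates into the Hasse-Witt invariant of the $q$-Weil polynomial $\tL(T)$ (in the sense of Definition \ref{3.4}(iii)) being $0$. Since $r(T)^e$ is a divisor of $\tL(T)$ and is itself a $q$-Weil polynomial (its roots are $q$-Weil numbers by Theorem \ref{a.7}), Lemma \ref{3.6} applies and forces the Hasse-Witt invariant of $r(T)^e$ to be $0$ as well. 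Hence the corresponding elementary abelian variety $A$ satisfies the hypothesis of Theorem \ref{a.8}.

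Finally, applying Theorem \ref{a.8} directly to this $A$ (which has dimension $g>1$ by assumption) yields $\nu_p(a_1) \ge n/g$, as desired.

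I do not anticipate a serious obstacle: the real content is in Theorem \ref{a.8} and Lemma \ref{3.6}. The only points requiring care are (i) making sure that the factor $r(T)^e$ in the canonical factorization over $\QQ$ is actually realized as the characteristic polynomial of an elementary abelian subvariety (this is where the combination of Theorem \ref{a.6} and Theorem \ref{a.7} is essential, since a priori the canonical $\QQ$-factorization and the isogeny decomposition must be matched), and (ii) verifying that the notion of ``Hasse-Witt invariant $0$'' coming from the curve via Proposition \ref{a.2} coincides with the notion used for $q$-Weil polynomials in Lemma \ref{3.6}. Both are essentially bookkeeping once the previous results are in hand, so the corollary drops out.
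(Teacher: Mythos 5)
Your proposal is correct and is essentially the argument the paper intends: the corollary is stated in the appendix without proof, and its justification---the same chain the paper runs in the proof of Theorem \ref{3.9}---is exactly yours, namely Theorems \ref{a.6} and \ref{a.7} to identify each canonical factor $r(T)^e$ with the characteristic polynomial of an elementary isogeny factor of $\mJ_{\mX}$, Proposition \ref{a.2} together with Lemma \ref{3.6} to transfer the Hasse--Witt-zero hypothesis to that factor, and then the appendix bound for elementary abelian varieties with Hasse--Witt invariant $0$ to conclude. The only caveat, which the paper shares, is that if $\mJ_{\mX}$ has repeated elementary isogeny factors then the canonical factor $r(T)^e$ is the characteristic polynomial of a \emph{power} $A^t$ of an elementary variety rather than of an elementary one; this is harmless because the coefficient of $T^{2g-1}$ in $\Phi_A(T)^t$ is $t$ times that of $\Phi_A(T)$, so the valuation bound only improves, but strictly speaking your step (i) needs this one-line remark.
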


\end{document}